\begin{document}

\raggedbottom

\newcommand{\s}{\sim}
\newcommand{\e}{\varepsilon}
\newcommand{\p}{\varphi}
\newcommand{\qm}{q_-}
\newcommand{\qp}{q_+}
\newcommand{\LR}{\quad\Longrightarrow\quad}
\newcommand{\half}{\tfrac{1}{2}}

\title*{Connecting the Deep Quench Obstacle Problem with Surface Diffusion via their Steady States\\[1ex]}
\titlerunning{The Deep Quench Obstacle Problem}

\author{Eric A. Carlen, Amy Novick-Cohen, and Lydia Peres Hari}
\institute{Eric A. Carlen \at Department of Mathematics, Hill Center, Rutgers University, 110 Frelinghuysen
Road, Piscataway, 08854-8019, New Jersey, {USA}, \email{carlen@math.rutgers.edu}
\and Amy Novick-Cohen \at Department of Mathematics, Technion-IIT, Haifa 32000, Israel, \email{amync@technion.ac.il}
\and Lydia Peres Hari \at Department of Mathematics, Technion-IIT, Haifa 32000, Israel, \email{lydia@technion.ac.il}}
%
%
%
%
%
%
%


\subtitle{\textit{In memory of Maria Conceicao Carvalho and her passion for life and science.}}
\maketitle

\abstract{In modeling phase transitions, it is useful to be able to connect diffuse interface descriptions of the dynamics with corresponding limiting sharp interface motions.
In the case of the deep quench obstacle problem (DQOP) and surface diffusion (SD), while a formal connection was demonstrated many years ago, rigorous proof of the connection has yet to be established.
In the present note, we show how information regarding the steady states for both these motions can provide insight into the dynamic connection, and we outline  tools that should
enable further progress. For simplicity, we take both motions to be defined on a planar disk.}

\keywords{Deep quench obstacle problem, surface diffusion, higher order degenerate parabolic equations, geometric motions, limiting motions.}




\section{Introduction}\label{sec1}
Many two component mixtures  exist stably at one temperature, but become unstable at lower temperatures; the subsequent instability typically initiates  phase separation,
leading to the appearance of spatial regions characterized by two different compositional phases. During the early stages of phase separation, the distinction between
the two phases is not sharp and a diffuse description is appropriate. As phase separation progresses, the phases become more distinct,
and a sharp interface description is appropriate. Accordingly, in physically realistic models, it should be possible to pass from one description to the other.
Unfortunately, often  there is  a gap between what is physically reasonable and what is possible to justify with mathematical rigor.
For example, though the diffuse interface Cahn-Hilliard model with a logarithmic potential and a degenerate mobility has been shown using formal asymptotics  \cite{CENC} to yield
the sharp interface surface diffusion model, the connection has yet to have been made rigorous.  In the present note, we focus on the zero temperature limit of the
Cahn-Hilliard model with a logarithmic potential and degenerate mobility, namely on the deep quench obstacle problem with degenerate mobility, which we shall subsequently refer to here simply as the deep quench obstacle problem (DQOP), and
its connection with motion by (isotropic) surface diffusion (SD).

After recalling below some relevant background with regard to both models, in  Section \ref{Steadystates} we  discuss
the steady states in some detail for both models, focusing in particular on the minimum energy steady states, and then in Section \ref{section3} we outline certain tools that we are using to bridge the
two evolutions.

\bigskip
Let us consider both motions, (DQOP) and (SD), to be defined  in $\Omega \subset R^2,$  where $\Omega$ is a disk centered at the origin whose  radius,  $R_0$, is $O(1)$. In considering time evolution, we set $\Omega_T:=\Omega \times (0,T)$ and $\partial\Omega_T:=\partial\Omega \times (0, T)$ with $0<T \le \infty$.
The  deep quench obstacle problem \cite{OonoPuri} with the degenerate mobility $M(u)=1 - u^2$, can be expressed as
\begin{equation} \nonumber
(\rm{DQOP})\quad \left\{  \begin{array}l
u_t= \nabla \cdot  M(u) \nabla w, \quad
w+u + \epsilon^2 \triangle u \in \partial I_{[-1,1]}(u), \quad (x,t)\in \Omega_T, \\[1ex]
n \cdot \nabla u= n \cdot M(u) \nabla w=0,  \quad (x,t)\in \partial\Omega_T, \\[1ex]
u(x,0)=u_0(x), \quad x \in \Omega, \end{array} \right.
\end{equation}
where  $\partial I_{[-1,1]}(\cdot)$ denotes the sub-differential of the indicator function $I_{[-1,1]}(\cdot)$ and
$n$ denotes the unit exterior normal to $\partial\Omega$.
The deep quench obstacle problem, (DQOP), constitutes the formal zero temperature ($\Theta \downarrow 0$) limit of the
Cahn-Hilliard equation with degenerate mobility and with a logarithmic potential \cite{ElliottLuckhaus}:
\begin{equation} \nonumber\label{e:dsCH}
\left\{  \begin{array}l
u_t= \nabla \cdot  M(u) \nabla w, \quad
w=\frac{\Theta}{2}\{ \ln(1+u) - \ln(1-u) \} -u - \epsilon^2 \triangle u, \quad (x,t)\in \Omega_T, \\[1ex]
n \cdot \nabla u= n \cdot M(u) \nabla w=0,  \quad (x,t)\in \partial\Omega_T, \\[1ex]
u(x,0)=u_0(x), \quad x \in \Omega. \end{array} \right.
\end{equation}
Given a smoothly embedded curve $\Gamma=\Gamma(t) \subset \Omega$, the curve $\Gamma(t)$ is said to evolve by motion by surface diffusion  \cite{Mullins1957} if, up to  rescaling by constants,
 \begin{equation} \nonumber
(\rm{SD})\quad \left\{  \begin{array}l
V= - \kappa_{s s}, \quad (x(s,t),t)\in \Omega_T,\\[1ex]
\Gamma(0)=\Gamma_0, \end{array} \right.
\end{equation}
where $s$ denotes an arc-length parametrization of $\Gamma(t)$, and $V$ and $\kappa$ denote, respectively, the normal velocity and the mean curvature of $\Gamma(t),$
defined in accordance with the exterior normal, $n$, relative to the arc-length parametrization, $s$.

Often it is convenient to express the deep quench obstacle problem with degenerate mobility somewhat informally  as
\begin{equation} \label{DQOP}
\quad \left\{  \begin{array}l
u_t= - \nabla \cdot  M(u) \nabla\,(u + \epsilon^2 \triangle u), \quad -1 \le u \le 1, \quad (x,t)\in \Omega_T, \\[1ex]
n \cdot \nabla u= n \cdot M(u) \nabla \,(u + \epsilon^2 \triangle u)=0,  \quad (x,t)\in \partial\Omega_T, \\[1ex]
u(x,0)=u_0(x), \quad x \in \Omega. \end{array} \right.
\end{equation}
As shown in \cite{CENC}, by considering  (DQOP) on the slow time scale, $\tau=\epsilon^2 t$,  to leading order the $\epsilon \downarrow 0$ limit of (DQOP) yields surface diffusion motion (SD)
for $\Gamma=\Gamma(t)$, where $\Gamma(t)$ denotes  the limit of the  $O(\epsilon)$ width interfaces which  partition the composition $u(x,t)$ in the context of (DQOP) into two phases, $u=\pm 1$, in $\Omega$. Off hand, $\Gamma(t)$ may contain one or more  components.

\bigskip
With regard to existence, the following is implied by
\cite[Theorem 1.1]{BNCN}:

\smallskip\par\noindent
\begin{theorem} \label{oldreg}
Let $\varepsilon>0$ and $T>0$, and let $\Omega \subset R^2$ be a bounded disk centered at the origin.
Let $( \cdot , \cdot )$ denote the $L^2(\Omega)$ inner product, and let $\langle \cdot, \cdot \rangle$ denote the $H^1(\Omega),$ $(H^1(\Omega))'$ duality pairing.
Suppose  that $u_0 \in \mathcal{K}:= \{ \eta \in H^1(\Omega)\,  | \, |\eta| \le 1\}.$ Then there exists a pair $\{ u,w \},$ such that
$u \in L^2(0,T; H^2(\Omega)) \cap H^1(0, T; (H^1(\Omega))') \cap L^{\infty}(0, T; \mathcal{K} ),$ $w \in L^2(\Omega_T),$ with $w \in H^1_{loc}(\{M(u)>0\})$ for a.e. $t \in (0,T)$,
 and
\begin{equation}
\left\{
\begin{array}l
\langle\frac{\partial u}{\partial t} , \eta \rangle + \int_{ \{M(u)>0\}} \nabla w \cdot M(u) \nabla \eta \, dx =0, \quad \forall \eta \in H^1(\Omega) \;\;\; a.e.  \;\;\; t \in (0,T),\\[1ex]
\epsilon^2(\nabla u, \nabla \eta - \nabla u) - (u, \eta -u) \ge (w, \eta -u), \quad \forall \eta \in \mathcal{K} \;\;\; a.e. \;\;\; t \in (0, T),\\[1ex]
u(x,0)=u_0, \quad x \in \Omega.
\end{array} \right. \end{equation}
\end{theorem}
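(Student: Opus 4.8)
The plan is to construct the pair $\{u,w\}$ as a limit of solutions of a doubly regularized problem, in which the mobility is made uniformly positive and the non-smooth obstacle constraint is smoothed out, and then to recover the degenerate equation together with the variational inequality by compactness and monotonicity; this is the route of \cite[Theorem 1.1]{BNCN}, from which the present statement follows by specialization to a disk and to $M(u)=1-u^2$, so I only indicate the main points. First I would regularize: for $\delta\in(0,1)$ replace $M(u)=1-u^2$ by a truncated strictly positive mobility $M_\delta$ with $\delta\le M_\delta\le 1$ and $M_\delta\to M$ uniformly, and replace $\partial I_{[-1,1]}(u)$ by the derivative of a regularized convex potential, for instance the logarithmic term $\tfrac{\Theta}{2}\{\ln(1+u)-\ln(1-u)\}$ appearing in the non-degenerate Cahn--Hilliard model displayed above. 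The resulting problem admits, by classical arguments (Galerkin approximation in space, an ODE existence argument, energy estimates, and passage to the Galerkin limit), a weak solution $\{u_\delta,w_\delta\}$ with $u_\delta\in L^2(0,T;H^2(\Omega))\cap H^1(0,T;(H^1(\Omega))')$, $w_\delta\in L^2(0,T;H^1(\Omega))$, and $|u_\delta|<1$ forced by the singular potential.

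The uniform estimates come from the energy. Testing the $w_\delta$-equation with $\partial_t u_\delta$ and the $u_\delta$-equation with $w_\delta$ gives the dissipation identity
\[
\frac{d}{dt}\,\mathcal{E}_\varepsilon(u_\delta)=-\int_\Omega M_\delta(u_\delta)\,|\nabla w_\delta|^2\,dx ,
\]
with $\mathcal{E}_\varepsilon(v):=\int_\Omega\bigl(\tfrac{\varepsilon^2}{2}|\nabla v|^2-\tfrac{1}{2}v^2\bigr)\,dx$ plus the regularization term; since $\mathcal{E}_\varepsilon$ is bounded below on $\mathcal{K}$, integration in time bounds $u_\delta$ in $L^\infty(0,T;H^1(\Omega))$ and $\sqrt{M_\delta(u_\delta)}\,\nabla w_\delta$ in $L^2(\Omega_T)$, uniformly in $\delta$. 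From $|\langle\partial_t u_\delta,\eta\rangle|=\bigl|\int_\Omega M_\delta(u_\delta)\nabla w_\delta\cdot\nabla\eta\,dx\bigr|\le\|\sqrt{M_\delta(u_\delta)}\,\nabla w_\delta\|_{L^2}\,\|\nabla\eta\|_{L^2}$ one then bounds $\partial_t u_\delta$ in $L^2(0,T;(H^1(\Omega))')$. A cut-off test function supported where $M_\delta(u_\delta)\ge c>0$, together with elliptic regularity, gives an $H^1$ bound on $w_\delta$ local in $\{M(u)>0\}$; and, once a uniform $L^2(\Omega_T)$ bound on $w_\delta$ is secured from the structure of the regularized potential and the zero-flux boundary condition, treating $u_\delta(\cdot,t)$ as the solution of a stationary obstacle problem with right-hand side $u_\delta+w_\delta\in L^2$ yields a $\delta$-independent bound on $u_\delta$ in $L^2(0,T;H^2(\Omega))$.

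Next I would pass to the limit $\delta\downarrow 0$ (and, with the logarithmic regularization, $\Theta\downarrow 0$). The bounds $u_\delta\in L^\infty(0,T;H^1)\cap H^1(0,T;(H^1)')$ and a standard Aubin--Lions compactness argument produce a subsequence with $u_\delta\to u$ strongly in $L^2(\Omega_T)$ and a.e., whence $M_\delta(u_\delta)\to M(u)$ strongly in every $L^p(\Omega_T)$, $p<\infty$, and $|u|\le 1$ a.e.; weak limits give $u\in L^2(0,T;H^2)\cap H^1(0,T;(H^1)')\cap L^\infty(0,T;\mathcal{K})$ and $w_\delta\rightharpoonup w$ in $L^2(\Omega_T)$. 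The flux $J_\delta:=M_\delta(u_\delta)\nabla w_\delta=\sqrt{M_\delta(u_\delta)}\,\bigl(\sqrt{M_\delta(u_\delta)}\,\nabla w_\delta\bigr)$ is bounded in $L^2(\Omega_T)$ and converges weakly to some $J$; on $\{M(u)>0\}$ the local $H^1$ bound identifies $J=M(u)\nabla w$ with $w\in H^1_{loc}(\{M(u)>0\})$ for a.e. $t$, while on $\{M(u)=0\}$ one shows $J=0$, so the first equation of the system passes to the limit. The variational inequality in the second line follows by passing to the limit in the inequality encoding the regularized potential, using its monotonicity and the strong $L^2$ convergence of $u_\delta$; and the initial condition is inherited because $u_\delta\in C([0,T];(H^1)')$ with $u_\delta(0)=u_0$.

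The step I expect to be the main obstacle is the degenerate-mobility limit near the contact set $\{|u|=1\}=\{M(u)=0\}$, where $M_\delta$ degenerates and no uniform control on $w_\delta$ is available: one must verify that the weak limit $J$ of the fluxes carries no mass there and is consistent with $w$ being defined only locally on $\{M(u)>0\}$, and one must likewise produce the uniform $L^2(\Omega_T)$ bound on $w_\delta$ used above. These are precisely the difficulties handled in the weak-solution theory for the degenerate Cahn--Hilliard equation that underlies \cite[Theorem 1.1]{BNCN}; the restriction to a disk and to the specific mobility $M(u)=1-u^2$ plays no essential role in it.
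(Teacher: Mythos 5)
Your proposal is correct and follows essentially the same route as the paper, which gives no independent proof of this statement but simply invokes \cite[Theorem 1.1]{BNCN}; your sketch is a faithful outline of the regularization--compactness strategy (positive mobility approximation, smoothed obstacle, energy dissipation estimates, Aubin--Lions, and identification of the limiting flux off the contact set) on which that reference is built. You also correctly single out the genuinely delicate step, namely controlling the flux and the chemical potential where $M(u)$ degenerates, which is exactly where the cited weak-solution theory does its work.
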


With regard to (SD),  we  follow the discussion in \cite{Wheeler2013}.
Let $\Gamma: R \rightarrow R^2$ be a regular smooth immersed plane curve, which is  periodic and closed with period $P \in (0, \infty)$,
so that in fact $\Gamma: S^1 \rightarrow R^2.$  We shall assume throughout that $\Gamma$ is parameterized by arc-length.  With regard to local
existence,

\smallskip\par\noindent
\begin{theorem} \label{SDexistlocal}
Suppose  $\Gamma_0: R \rightarrow R^2$ is a regular closed periodic  curve parametrized by arc-length, of class $\mathcal{C}^2 \cap W^{2,2}$
with $||\kappa||_{L^2(\Gamma_0)} < \infty$. Then there exists  $T \in (0, \infty]$ and a unique one-parameter family of immersions parametrised by arc-length, $\Gamma: R \times [0, T) \rightarrow R^2,$
 such that
$(i) \; \Gamma(0,\cdot)=\Gamma_0$,
$(ii) \; V=-\kappa_{ss}$,
$(iii) \; \Gamma(\cdot,\, t) \hbox{\, is of class \,} \mathcal{C}^\infty$ and periodic with period $|\Gamma(\cdot,\,t)|,\; \forall  t \in (0, T)$, and
$(iv) \; T \hbox{\, is maximal.\,}$
\end{theorem}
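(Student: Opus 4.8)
The plan is to convert the geometric evolution $V=-\kappa_{ss}$ into a scalar fourth-order quasilinear parabolic equation, solve that equation for smooth data by standard parabolic theory, and then reach the stated low-regularity data by smooth approximation together with a priori estimates that are uniform along the approximation; the assertions (i)--(iv) are recovered at the end.

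First I would write a curve near a fixed smooth reference curve $\Gamma_{\mathrm{ref}}$ as a normal graph, $\Gamma(x,t)=\Gamma_{\mathrm{ref}}(x)+\rho(x,t)\,\nu_{\mathrm{ref}}(x)$, with $\rho$ scalar, periodic with the period of $\Gamma_{\mathrm{ref}}$, and small in $\mathcal{C}^1$. The tangential part of $\partial_t\Gamma$ is unconstrained since $V=-\kappa_{ss}$ is a geometric flow; the graph ansatz itself fixes this gauge, and is admissible on a short time interval because the evolving curve stays $\mathcal{C}^1$-close to $\Gamma_{\mathrm{ref}}$. Using $V=\langle\partial_t\Gamma,\nu\rangle=\langle\nu_{\mathrm{ref}},\nu\rangle\,\partial_t\rho$ and expressing $-\kappa_{ss}$ through $\rho$, the flow becomes
\[
\partial_t\rho \;=\; -\,a\big(x,\rho,\partial_x\rho,\partial_x^2\rho\big)\,\partial_x^4\rho \;+\; f\big(x,\rho,\partial_x\rho,\partial_x^2\rho,\partial_x^3\rho\big),
\]
with $a>0$ as long as $\rho$ stays small in $\mathcal{C}^1$; the principal symbol $-a\,\xi^4<0$ makes the equation uniformly parabolic there. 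For a smooth initial curve one takes $\Gamma_{\mathrm{ref}}=\Gamma_0$, $\rho(\cdot,0)\equiv0$, and invokes the standard existence and uniqueness theory for quasilinear parabolic equations (maximal $L^p$-regularity, analytic semigroups, or Solonnikov-type Schauder estimates) to obtain a unique local $\rho$, hence a unique one-parameter family of immersed curves solving $V=-\kappa_{ss}$; reparametrising each curve by arc-length gives (i) and (ii). Smoothing for positive time, hence the smoothness in (iii), follows by the usual parabolic bootstrap, since once $\rho$ lies in the natural regularity class the linearised operator has smooth coefficients on every strip $[\delta,T)$. Uniqueness of the geometric flow follows because two solutions issuing from $\Gamma_0$ can be written as normal graphs over one another and compared through the PDE.

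To reach the stated data, $\Gamma_0\in\mathcal{C}^2\cap W^{2,2}$ with $\|\kappa\|_{L^2(\Gamma_0)}<\infty$, which is below what the fixed-reference scheme handles directly (the coefficients of the $\rho$-equation would then be merely continuous), I would approximate $\Gamma_0$ by smooth closed curves $\Gamma_0^j\to\Gamma_0$ in $\mathcal{C}^2\cap W^{2,2}$, solve to obtain smooth $\Gamma^j$, and extract a limit. The estimates come from differentiating the curvature energy along the flow,
\[
\frac{d}{dt}\int_{\Gamma(t)}\kappa^2\,ds \;=\; -2\int_{\Gamma(t)}\kappa_{ss}^2\,ds \;+\; 3\int_{\Gamma(t)}\kappa^2\kappa_s^2\,ds,
\]
and absorbing the quartic term into $-2\int\kappa_{ss}^2\,ds$ via Gagliardo--Nirenberg interpolation on curves together with control of the length. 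This yields $\sup_{[0,T_0]}\int\kappa^2\,ds\le C$ and $\int_0^{T_0}\!\!\int\kappa_{ss}^2\,ds\,dt\le C$ for some lifespan $T_0=T_0(\|\kappa_0\|_{L^2},L_0)>0$ independent of $j$; differentiating the flow and iterating the same scheme bootstraps uniform bounds on $\int(\partial_s^k\kappa)^2\,ds$ over every $[\delta,T_0]$. Weak-$*$ compactness, together with lower semicontinuity of the energy and strong convergence of the geometry on compact time subsets of $(0,T_0)$, then produces a limiting family on $[0,T_0)$ solving $V=-\kappa_{ss}$, smooth for $t>0$, attaining $\Gamma_0$ as $t\downarrow0$ in $\mathcal{C}^2\cap W^{2,2}$. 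Finally, $T$ is defined as the supremum of times to which the (now unique, smooth-for-$t>0$) solution extends, so it is maximal by construction; this gives (iv).

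The genuine obstacle, I expect, is precisely this low-regularity step: one must show that the lifespan $T_0$ and all the higher-order bounds on $[\delta,T_0]$ depend only on $\|\kappa_0\|_{L^2}$ and $L_0$, so that they survive the approximation, and that the limiting family takes on $\Gamma_0$ in the stated sense rather than merely weakly. The good dissipative term $-2\int\kappa_{ss}^2\,ds$, with sharp interpolation constants, is what delivers such a quantitative, data-dependent lifespan; the reduction to a parabolic PDE and the smooth-data theory are, by comparison, routine. This is essentially the line of argument followed in \cite{Wheeler2013}.
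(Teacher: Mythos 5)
The paper does not prove this statement itself: it is imported verbatim (up to paraphrase) from Wheeler's work \cite{Wheeler2013}, where local existence is obtained exactly along the lines you describe — reduction to a fourth-order quasilinear parabolic equation via a normal-graph ansatz, standard smooth-data theory, and then the low-regularity $\mathcal{C}^2\cap W^{2,2}$ data reached by approximation together with the lifespan estimate coming from $\frac{d}{dt}\int\kappa^2\,ds=-2\int\kappa_{ss}^2\,ds+3\int\kappa^2\kappa_s^2\,ds$ and Gagliardo--Nirenberg interpolation. Your outline, including the identification of the uniform-in-approximation lifespan as the only delicate step, is correct and is essentially the same argument as in the cited source.
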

\par\noindent
The uniqueness mentioned  in Theorem \ref{SDexistlocal} is modulo rotations, translations, changes in orientation, in accordance with the natural group of invariances for geometric flows in general, and for (SD) in particular.

Though the theorem above is formulated for one regular immersed circular curve, clearly Theorem \ref{SDexistlocal} readily generalizes to accommodate a finite union $\cup_{i\in I} \Gamma_i,$  $I \subset \mathcal{N},$  of such curves, which suffices
for the purpose of the discussion that follows.

\bigskip
Note that while the existence for (DQOP) is guaranteed by Theorem \ref{oldreg}   for arbitrary $T>0$,  existence for (SD) is guaranteed by Theorem \ref{SDexistlocal} only on some maximal interval. Indeed in the context
of (SD) the maximal interval may well be finite for various reasons,  for example, due to finite time self-intersection or curvature singularity formation.  If we wish to consider and compare the steady states for
both evolutions, the following theorem  which prescribes conditions  guaranteeing global existence for (SD)  is helpful, and can be readily adapted for $\cup_{i\in I} \Gamma_i,$ $I \subset \mathcal{N}$.

\smallskip\par
\begin{theorem}(see \cite[Theorem 1.1]{Wheeler2013})\label{SDglobal}
Let $\Gamma_0:  S^1 \rightarrow R^2$ be a regular smooth immersed closed curve with finite enclosed signed area, $\mathcal{A}(\Gamma_0)>0$ with $\int_{\Gamma_0} \kappa \,ds=2\pi$. Then there exists a constant
$K^\ast>0$, such that if
$$ K_{\rm{osc}}(\Gamma_0)< K^\ast \hbox{\, and \,} I(\Gamma_0) < \exp({K^\ast}/({8 \pi^2})),$$
where $K_{\rm{osc}}(\Gamma)= |\Gamma| \int_\Gamma (\kappa-\bar{\kappa})^2 \, ds$, with $\bar{\kappa} = |\Gamma|^{-1} \int_\Gamma \, \kappa \, ds$, denotes the normalized oscillation of the curvature
and $I=|\Gamma|^2/(4 \pi \mathcal{A}(\Gamma))$ is the isoperimetric ratio,
 the (SD) evolution for $\Gamma:S^1 \times [0, T) \rightarrow R^2$, with $\Gamma_0$ as initial data, exists for all time and converges exponentially fast to a round circle with radius $\sqrt{\mathcal{A}(\Gamma_0)/\pi}$.
\end{theorem}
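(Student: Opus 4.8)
The plan is to treat (SD) with the given initial data as a small, area-constrained perturbation of its unique round equilibrium and to run a continuity (bootstrap) argument. First I would record the conservation and monotonicity laws: differentiating along $V=-\kappa_{ss}$ gives $\tfrac{d}{dt}\mathcal{A}(\Gamma)=\pm\int_\Gamma\kappa_{ss}\,ds=0$, so $\mathcal{A}(\Gamma(t))\equiv\mathcal{A}(\Gamma_0)$; $\tfrac{d}{dt}|\Gamma|=-\int_\Gamma\kappa_s^2\,ds\le 0$, so $|\Gamma(t)|\le|\Gamma_0|$ and the isoperimetric ratio $I(\Gamma(t))$ is non-increasing, hence $\le I(\Gamma_0)$; and $|\Gamma(t)|\ge 2\sqrt{\pi\mathcal{A}(\Gamma_0)}$ by the isoperimetric inequality, so the length stays in a fixed compact range. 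Preservation of the rotation index keeps $\int_{\Gamma(t)}\kappa\,ds=2\pi$, so $\bar\kappa=2\pi/|\Gamma|$ and $K_{\rm{osc}}(\Gamma)=|\Gamma|\int_\Gamma\kappa^2\,ds-4\pi^2\ge 0$, vanishing only on the round circle. I would also keep in reserve the identity $\tfrac{d}{dt}\log I=-\tfrac{2}{|\Gamma|}\int_\Gamma\kappa_s^2\,ds$.

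\textbf{The key differential inequality.} Using $[\partial_t,\partial_s]=\kappa V\partial_s$, $\kappa_t=-\kappa_{ssss}-\kappa^2\kappa_{ss}$ and $\partial_t(ds)=-\kappa V\,ds$, I would derive $\tfrac{d}{dt}\int_\Gamma\kappa^2\,ds=-2\int_\Gamma\kappa_{ss}^2\,ds+3\int_\Gamma\kappa^2\kappa_s^2\,ds$, and inductively similar identities for $\int_\Gamma(\partial_s^m\kappa)^2\,ds$ with leading dissipative term $-2\int_\Gamma(\partial_s^{m+2}\kappa)^2\,ds$ and a polynomial remainder in $\kappa,\kappa_s,\dots$; combined with $\tfrac{d}{dt}|\Gamma|$ this yields evolution equations for $K_{\rm{osc}}$ and for the scale-invariant quantities $|\Gamma|^{2m+1}\int_\Gamma(\partial_s^m\kappa)^2\,ds$. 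The core estimate is then that, while $K_{\rm{osc}}$ is small,
\[
\frac{d}{dt}\,K_{\rm{osc}}\;\le\;-\,\frac{c_0}{|\Gamma_0|^{4}}\,K_{\rm{osc}}
\qquad\text{as long as}\qquad K_{\rm{osc}}\le K^{\ast},
\]
for explicit universal $c_0>0$, $K^{\ast}>0$. To obtain it I would use Wirtinger's inequality with sharp constant $(|\Gamma|/2\pi)^2$ for the mean-zero functions $\kappa-\bar\kappa$ and $\kappa_s$, together with the geometric fact that a closed curve of rotation index one cannot carry a pure first-Fourier-mode curvature oscillation (so the first-mode part of $\kappa-\bar\kappa$ is quadratically small and the \emph{second} spectral gap is effectively available); this lets the destabilizing term $3|\Gamma|\bar\kappa^2\int_\Gamma\kappa_s^2\,ds$ be dominated by the dissipation, after which all genuinely higher-order remainders are absorbed via length-weighted Gagliardo--Nirenberg inequalities whose constants are controlled by $\mathcal{A}(\Gamma_0)$ and by $I(\Gamma_0)$ (which fixes the range of $|\Gamma|$, hence of $\bar\kappa$). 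It is here that the hypothesis $I(\Gamma_0)<\exp(K^{\ast}/(8\pi^2))$ enters: together with the integrated form of $\tfrac{d}{dt}\log I$ it guarantees that the length --- hence the interpolation constants --- stays in the window where the absorption is valid throughout the flow.

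\textbf{Bootstrap and convergence.} With the inequality above in hand, set $T^{\ast}=\sup\{t\in[0,T):K_{\rm{osc}}(\Gamma(\tau))<K^{\ast}\ \forall\,\tau<t\}>0$. On $[0,T^{\ast})$ one gets $K_{\rm{osc}}(\Gamma(t))\le K_{\rm{osc}}(\Gamma_0)e^{-c_0 t/|\Gamma_0|^{4}}<K^{\ast}$, so $\int_\Gamma\kappa^2\,ds=(K_{\rm{osc}}+4\pi^2)/|\Gamma|$ is uniformly bounded and the curve stays uniformly noncollapsed of bounded length; the continuation criterion behind Theorem~\ref{SDexistlocal} (loss of maximality forces $\|\kappa\|_{L^2}\to\infty$) then forces $T^{\ast}=T=\infty$. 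Running the analogous inequalities up the hierarchy gives exponential decay of every $\|\partial_s^m(\kappa-\bar\kappa)\|_{L^2}$, hence of $\|\kappa-\bar\kappa\|_{C^k(\Gamma)}$. Since $|\Gamma(t)|$ is monotone and bounded below it converges to some $L_\infty$; exponential decay of $\int_\Gamma\kappa_s^2\,ds=-\tfrac{d}{dt}|\Gamma|$ makes the convergence exponential, so $\bar\kappa\to2\pi/L_\infty$ exponentially, and since $\kappa\to\bar\kappa$ in $C^\infty$ with enclosed area pinned at $\mathcal{A}(\Gamma_0)$ the limit shape is the round circle of area $\mathcal{A}(\Gamma_0)$, i.e.\ of radius $\sqrt{\mathcal{A}(\Gamma_0)/\pi}$ (so $L_\infty=2\sqrt{\pi\mathcal{A}(\Gamma_0)}$ and $I\to1$). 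Finally, since $V=-\kappa_{ss}$ and all its $s$-derivatives decay exponentially, the position map $\Gamma(\cdot,t)$ is exponentially Cauchy in $C^\infty$, so it converges exponentially fast to one fixed round circle of radius $\sqrt{\mathcal{A}(\Gamma_0)/\pi}$.

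\textbf{Main obstacle.} The hard part is the differential inequality for $K_{\rm{osc}}$: the dissipation only barely beats the destabilizing quadratic term, so one must genuinely exploit the second spectral gap --- equivalently, that a pure first-mode curvature oscillation is geometrically inadmissible on a closed curve --- and then absorb every remaining nonlinear term with length-weighted interpolation, carefully tracking how the admissible range of $|\Gamma|$, controlled by $I(\Gamma_0)$, feeds back into the constants. Pinning down the explicit threshold $K^{\ast}$ and the matching bound $\exp(K^{\ast}/(8\pi^2))$ on the initial isoperimetric ratio is the real content of the proof; the bootstrap and the upgrade to $C^\infty$ convergence of the curve are then routine, if lengthy.
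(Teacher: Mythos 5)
First, a point of reference: the paper does not prove Theorem \ref{SDglobal} at all — it is quoted, with attribution, from Wheeler \cite[Theorem 1.1]{Wheeler2013}, so there is no in-paper argument to compare yours against. Measured against Wheeler's actual proof, your outline reproduces the correct scaffolding: conservation of $\mathcal{A}$, monotonicity of $|\Gamma|$ and hence of $I$, the identity $K_{\rm{osc}}=|\Gamma|\int_\Gamma\kappa^2\,ds-4\pi^2$, the evolution equation $\tfrac{d}{dt}\int_\Gamma\kappa^2\,ds=-2\int_\Gamma\kappa_{ss}^2\,ds+3\int_\Gamma\kappa^2\kappa_s^2\,ds$, and the bootstrap-plus-continuation structure at the end are all right.

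The gap is exactly where you say the ``real content'' lies, and it is not a small one. Expanding $\kappa=\bar\kappa+u$ and computing $\tfrac{d}{dt}K_{\rm{osc}}=|\Gamma|'\int_\Gamma\kappa^2\,ds+|\Gamma|\tfrac{d}{dt}\int_\Gamma\kappa^2\,ds$ to leading order, the three contributions are $-\int_\Gamma\kappa_s^2\,ds\int_\Gamma\kappa^2\,ds\approx-(4\pi^2/|\Gamma|)\int_\Gamma\kappa_s^2\,ds$, next $-2|\Gamma|\int_\Gamma\kappa_{ss}^2\,ds\le-(8\pi^2/|\Gamma|)\int_\Gamma\kappa_s^2\,ds$ by Wirtinger, and $+3|\Gamma|\bar\kappa^2\int_\Gamma\kappa_s^2\,ds=(12\pi^2/|\Gamma|)\int_\Gamma\kappa_s^2\,ds$; these cancel exactly, so the asserted strict inequality $\tfrac{d}{dt}K_{\rm{osc}}\le-c_0K_{\rm{osc}}/|\Gamma_0|^4$ cannot follow from smallness of $K_{\rm{osc}}$ together with the first Wirtinger gap alone. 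Your appeal to a ``second spectral gap,'' via the claim that a rotation-index-one closed curve cannot carry a pure first-mode curvature oscillation, is the right instinct but is itself unproved: the closure condition $\int_\Gamma e^{i\theta}\,ds=0$ constrains the Fourier modes of $e^{i\theta}$, not directly those of $\kappa-\bar\kappa$, and converting one into the other is precisely the delicate step. Wheeler's route is moreover structurally different at this point: he first closes the bootstrap by \emph{integrating} the borderline term in time, using $\tfrac{d}{dt}\log I=-\tfrac{2}{|\Gamma|}\int_\Gamma\kappa_s^2\,ds$ and $I\ge1$ to bound $\int_0^\infty\tfrac{1}{|\Gamma|}\int_\Gamma\kappa_s^2\,ds\,dt$ by $\tfrac12\log I(\Gamma_0)$ — this is exactly where the hypothesis $I(\Gamma_0)<\exp(K^\ast/(8\pi^2))$ is consumed, yielding a bound of the form $K_{\rm{osc}}(t)\le K_{\rm{osc}}(\Gamma_0)+8\pi^2\log I(\Gamma_0)<2K^\ast$ rather than pointwise decay — and only afterwards upgrades to exponential decay via estimates on higher derivatives of $\kappa$. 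As written, your proposal asserts the conclusion of the hard lemma rather than proving it, so it does not yet constitute a proof.
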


\bigskip
\begin{remark} \label{remark1}
We remark that  more regularity is required for the initial conditions in Theorems \ref{SDexistlocal} and Theorem \ref{SDglobal} than appears to be  required
for the initial conditions in Theorem \ref{oldreg}. However   in connecting the (SD) flow with the (DQOP) flow, the (SD) curve
 $\Gamma(t)$ corresponds rather naturally to the (DQOP) level set, $\{ u(x,t)=0 \,| \, (x,t) \in \Omega_T \,\},$ whose support lies within the set $\{ (x,t) \in \Omega_T \,|\, \mathcal{M}(u(x,t))>0\,\}$
 where the required regularity is guaranteed for $T>0$.
 \end{remark}

\bigskip
With regard to prominent dynamical features
for (DQOP) and (SD), notably
\begin{equation} \label{funcDQOP}
 E(t):= \frac{1}{\epsilon |\Omega|} \int_{\Omega} \{ (1-u^2) + \epsilon^2 |\nabla u|^2 \} \, dx \quad\rm{\, and \,}\quad \mathcal{L}(t):=|\Gamma(t)|,
  \end{equation}
 where $E(t)$ is a scaled free energy\footnote{Here $E(t)$ has been scaled so that typically as equilibrium is approached, $E(t) \propto \frac{{L}(t)}{|\Omega|}$, where
 $L(t)$ reflects the length of the interface of between the two phases following phase separation; see \cite{BNCN}.}
  for (DQOP),  are monotonically non-increasing  for (DQOP) and (SD), respectively. Furthermore,
 \begin{equation} \label{volDQOP}
 \bar{u}(t):=\frac{1}{|\Omega|} \int_{\Omega} u(x,t) \, dx \quad\rm{\, and \,}\quad \mathcal{A}(t)= - \frac{1}{2}\int_0^{\mathcal{L}(t)} \Gamma(s,t) n \, ds, \end{equation}
 where $\bar{u}(t)$, the mean mass of $u(x,t)$,  and $\mathcal{A}(t)$, the signed  area enclosed  by $\Gamma(t)$, are invariant under the respective (DQOP)  and (SD)   evolutions.
 The  Gamma limit ($\epsilon \downarrow 0$)  of the constrained mean mass ($\bar{u}=\bar{u}_0$) minimizers of $E(t)$  is well known \cite{PSternberg88,JerrardSternbergJEMS} to yield a curve $\Gamma$ with prescribed
  enclosed signed area, $\mathcal{A}_0=\mathcal{A}(0)$; this can be readily be shown
 to hold also for $u$ constrained to lie in $\mathcal{K}$ \cite{NC_Meccanica,PSternberg88}, with
  \begin{equation} \label{eqnAu}
  \mathcal{A}_0=\frac{1}{2}|\Omega|(1 - \bar{u}_0),
    \end{equation}
  where $\bar{u}_0:= \bar{u}(0)$  in the context of our geometric and boundary assumptions.
Notably, (DQOP)  can be formulated
 as a conserved $H^{-1}$ gradient flow  with respect to the energy functional, $E(t)$, and (SD) can be formulated as an $H^{-1}$ gradient flow  with respect to $\mathcal{L}(t)$; see \cite{TaylorCahn} and the discussion in Section \ref{section3}.
 Moreover, the functionals
 $$\textrm{Ent}(t):=\frac{1}{|\Omega|} \int_{\Omega} \{ (1-u)\ln(1-u) + (1+u)\ln(1+u) \} \, dx,$$ $$K_{\rm{osc}}(\Gamma(t))=|\Gamma| \int_\Gamma (k-\bar{k})^2 ds,$$
are non-increasing along the respective (DQOP) and (SD) flows\footnote{Here $\textrm{Ent}(t)$ reflects the physical entropy of the  system,
while $E(t)$ is a (scaled) free energy.}.
 For (DQOP), for initial conditions $u_0 \in \mathcal{K}$ which correspond to  a perturbation of $u_0 \equiv \bar{u}$, $\bar{u} \in (-1,1)$, the dynamics can be characterized in terms of
 an initial regime of linear instability and a long time coarsening regime, \cite{KohnOtto,DCDS_NCS2009,JSP_NC2010}. For further discussion, see \cite{BNCN}.


%
\section{Steady states}\label{Steadystates}

\bigskip
We assume both motions to be defined within $\Omega \subset R^2$, where $\Omega$ is a planar disc centered at the origin with  radius  $R_0$, where $R_0$ is $O(1)$. When considering time evolution, we set $\Omega_T=\Omega \times (0,T),$ where $0<T \le \infty$.

\bigskip
Note that in  (DQOP), no flux and Neumann boundary conditions are implied. For simplicity, we shall henceforth  assume, more specifically, that $u \equiv -1$ in a $\delta$-neighborhood of $\partial \Omega$,
with $\epsilon \ll \delta \ll O(1)$. While this assumption implies the boundary conditions given in (DQOP), it  somewhat limits the resultant dynamics and steady states.
Similarly, in studying (SD), surface diffusion motions, within $\Omega$, we  consider  evolving curves $\Gamma(t) \subset \Omega$ which, more specifically, satisfy
$\Gamma(t) \subset \mathcal{B}_{R_0}(\delta)$, where $\mathcal{B}_{R_0}(\delta)$ refers to the open planar disc centered at the origin with radius $R_0$ which has a bounding annular neighborhood with width $\delta$
where $u \equiv -1$.

\subsection{Steady states and energy minimizers for (SD)}\label{subsect2.1}

In considering (SD) energy minimizers, by recalling (\ref{funcDQOP}) one may view the energy to be given by the length of the curve, $\mathcal{L}(t)$,  to within scaling constants. We allow $\Gamma(t)$
to be comprised of a finite number of nonintersecting components, $\Gamma(t) = \cup_{i \in I} \Gamma_i(t)$, $I \subset \mathcal{N}$ with  $\Gamma_i \cap \Gamma_j = \emptyset$ for $i \neq j$.
Accordingly, the evolution by (SD) of each component $\Gamma_i$, $ i \in I$, may be prescribed as
\begin{equation} \label{SDi}
V^i = - \kappa^i_{s_i s_i}, \end{equation}
where $V^i$ and $\kappa^i$  denote respectively, the normal velocity and  the mean curvature of $\Gamma_i(t)$, and $s_i$ denotes an arc-length parametrization of  $\Gamma_i(t)$.

Within the context of these assumptions, it readily follows from (\ref{SDi}) that steady states correspond of a finite disjoint union of circular curves. Since the area encompassed by
curves evolving by surface diffusion is conserved under the evolution \cite{Wheeler2013},
it follows that  $\sum_{i \in I} A(\Gamma_i(t))$ is a conserved quantity. Taking into account the isoperimetric
inequality and (\ref{SDi}), we may now conclude the following:

\smallskip\par
\begin{theorem} \label{ssSD}
Under the assumptions outlined above,
the set of minimum energy steady states for (SD) corresponds to  the set of circular curves with radius $(A^*/\pi)^{1/2}$, with $A^*=\sum_{i \in I} A(\Gamma_i(0))$, which are located somewhere within $\mathcal{B}_{R_0}(\delta)$.
Any finite union of disjoint circular curves such that $\sum_{i \in I} A(\Gamma_i(t))=\sum_{i \in I} A(\Gamma_i(0))$ also corresponds to a steady state for (SD).
\end{theorem}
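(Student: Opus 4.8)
The plan is to read the steady-state condition directly off \eqref{SDi} is not what I want — off \,(\ref{SDi}), classify the curves it admits, and then minimize the energy over the admissible class using the conservation of area and the isoperimetric inequality.

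First I would characterize the steady states. A steady state of (\ref{SDi}) has $V^i\equiv 0$ for every $i\in I$, hence $\kappa^i_{s_is_i}\equiv 0$ along each closed, arc-length parametrized component $\Gamma_i$. Integrating once gives $\kappa^i_{s_i}\equiv c_i$ constant; integrating this around $\Gamma_i$ and using periodicity of $\kappa^i$ forces $c_i|\Gamma_i|=0$, so $\kappa^i_{s_i}\equiv 0$ and $\kappa^i$ is a constant. A closed plane curve of constant curvature is a circle — the curvature cannot vanish, since a straight segment cannot close up — so each $\Gamma_i$ is a round circle of some radius $r_i>0$; since the $\Gamma_i$ are assumed simple and pairwise disjoint, the steady state is a finite disjoint union of embedded, non-overlapping circles. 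Conversely, every such circle has $\kappa\equiv 1/r_i$ constant, hence $\kappa_{ss}\equiv 0$ and $V\equiv 0$, so it is a steady state; this already gives the second assertion of the theorem, namely that every finite disjoint union of round circles whose enclosed areas sum to $A^*$ is a steady state.

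Next I would single out the minimizers. By conservation of enclosed area under (SD) \cite{Wheeler2013} (cf. Theorem \ref{SDglobal}), $\sum_{i\in I}A(\Gamma_i(t))=\sum_{i\in I}A(\Gamma_i(0))=A^*$ along the flow, so any steady state reached from the given data is a disjoint union of circles with $\sum_{i\in I}\pi r_i^2=A^*$. Its energy equals, up to a fixed positive constant, the total length $\mathcal{L}=\sum_{i\in I}|\Gamma_i|$, and
\[
\mathcal{L}\;=\;\sum_{i\in I}|\Gamma_i|\;\ge\;\sum_{i\in I}\sqrt{4\pi\,A(\Gamma_i)}\;=\;2\sqrt{\pi}\sum_{i\in I}\sqrt{A(\Gamma_i)}\;\ge\;2\sqrt{\pi}\Big(\sum_{i\in I}A(\Gamma_i)\Big)^{1/2}\;=\;2\sqrt{\pi A^*},
\]
where the first inequality is the isoperimetric inequality (here already an equality, each $\Gamma_i$ being a circle) and the second uses $\sqrt{a}+\sqrt{b}\ge\sqrt{a+b}$ for $a,b\ge 0$. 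Equality throughout forces $|I|=1$ and a single circle of radius $(A^*/\pi)^{1/2}$. Since the energy is translation invariant, every placement of such a circle inside $\mathcal{B}_{R_0}(\delta)$ realizes the minimal value $2\sqrt{\pi A^*}$, and conversely every such circle is a steady state by the previous paragraph; this is exactly the first assertion. Here one uses that $A^*$ is small enough for a disc of radius $(A^*/\pi)^{1/2}$ to fit inside $\mathcal{B}_{R_0}(\delta)$, which is built into the standing geometric assumptions and is what ``located somewhere within'' encodes.

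The argument is elementary once Theorems \ref{SDexistlocal}–\ref{SDglobal} are in hand, so there is no single deep obstacle; the only points requiring care are (a) the admissibility/containment condition just noted, and (b) ruling out components that are immersed rather than embedded. For (b), a non-embedded closed curve of constant curvature is a multiply covered circle of winding number $k\ge 2$, contributing length $2\pi k r$ for enclosed signed area $k\pi r^2$, i.e. a strictly worse length-to-area ratio; and allowing components of mixed orientation only increases the total length for a prescribed net enclosed area. Hence neither possibility competes with the single embedded circle, so the classification of minimum-energy steady states is unaffected.
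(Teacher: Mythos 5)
Your proposal is correct and follows essentially the same route as the paper, which only sketches the argument in the two sentences preceding the theorem: read off from (\ref{SDi}) that steady states are finite disjoint unions of circles, invoke conservation of the total enclosed area, and apply the isoperimetric inequality. You simply supply the details the paper leaves implicit (the integration of $\kappa^i_{s_is_i}=0$ around a closed component, the subadditivity $\sqrt{a}+\sqrt{b}\ge\sqrt{a+b}$ forcing a single component, and the remark on multiply covered circles), all of which are sound.
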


\subsection{Steady states and energy minimizers for (DQOP)}\label{subsect2.2}

The energy minimizing steady states for (DQOP) correspond to the energy minimizers of $E(t)$ within the set  $u \in \mathcal{K}$ and which satisfy $\bar{u}=\bar{u}_0.$
Given the definition of the energy $E(t)$ and the geometry described at the beginning of this section, considerations of energy symmetrization and regularity
 \cite{BrothersZiemer}, lead us to conclude:

\smallskip\par
\begin{theorem}\label{minimizersDQOP}
Under the assumptions outlined above, the minimum energy steady states for (DQOP) are monotonically decreasing with respect to distance from the origin, modulo possible translations  within $\mathcal{B}_{R_0}(\delta)$.
\end{theorem}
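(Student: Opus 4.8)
The plan is to derive the theorem from Schwarz symmetrization together with the rigidity (equality) case of the Pólya–Szegő inequality established by Brothers and Ziemer \cite{BrothersZiemer}. By the discussion preceding the statement, the minimum energy steady states are exactly the minimizers of $E$ over the admissible class $\mathcal{A}:=\{u\in\mathcal{K}\,:\,\bar u=\bar u_0,\ u\equiv -1 \text{ on } \{R_0-\delta<|x|<R_0\}\}$, so it suffices to show that each such minimizer equals, after a translation by a vector that keeps it in $\mathcal{A}$, a radially symmetric function that is non-increasing in $|x|$ (which is the monotonicity asserted). Existence of a minimizer follows from the direct method --- $E$ is coercive and sequentially weakly lower semicontinuous on $H^1(\Omega)$ (the potential term is weakly continuous since $|u|\le 1$, the Dirichlet term is weakly lower semicontinuous) and $\mathcal{A}$ is convex and weakly closed --- whenever $\bar u_0$ is chosen so that $\mathcal{A}\neq\emptyset$, which we assume throughout.

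First I would record how symmetrization acts on $E$. Given $u\in\mathcal{A}$, set $v:=u+1\in[0,2]$ and extend $v$ by $0$ to $R^2$; this is legitimate because $\{v>0\}=\{u>-1\}\subseteq B_{R_0-\delta}$, compactly contained in $\Omega$. Let $v^\ast$ be the symmetric decreasing rearrangement of $v$ on $R^2$ and put $u^\ast:=v^\ast-1$. Since rearrangement preserves the distribution function, $\int_\Omega(1-(u^\ast)^2)\,dx=\int_\Omega(1-u^2)\,dx$ and $\bar u^\ast=\bar u_0$; also $-1\le u^\ast\le 1$, and because $|\{u^\ast>-1\}|=|\{u>-1\}|\le|B_{R_0-\delta}|$ we get $u^\ast\equiv -1$ on $\{R_0-\delta<|x|<R_0\}$, so $u^\ast\in\mathcal{A}$. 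The Pólya–Szegő inequality applied to the nonnegative $v$ gives $\int_\Omega|\nabla u^\ast|^2\,dx=\int_{R^2}|\nabla v^\ast|^2\le\int_{R^2}|\nabla v|^2=\int_\Omega|\nabla u|^2\,dx$, hence $E(u^\ast)\le E(u)$. In particular, if $u$ is a minimizer then so is $u^\ast$, and since the potential terms coincide exactly, $\int_\Omega|\nabla u|^2\,dx=\int_\Omega|\nabla u^\ast|^2\,dx$: equality holds in the Pólya–Szegő inequality for $v=u+1$.

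The final step is the rigidity. By \cite{BrothersZiemer}, equality in the Pólya–Szegő inequality forces $v=v^\ast(\cdot-x_0)$, i.e.\ $u=u^\ast(\cdot-x_0)$, for some $x_0\in R^2$, provided the non-degeneracy condition $\big|\{\nabla u^\ast=0\}\cap\{-1<u^\ast<1\}\big|=0$ holds. To check it I would use that $u^\ast$, being itself a minimizer of $E$ over $\mathcal{A}$, satisfies the stationary double-obstacle Euler–Lagrange variational inequality with a Lagrange multiplier $\lambda$ associated with the mass constraint; by obstacle-problem regularity $u^\ast\in C^{1,1}_{\rm loc}(\Omega)$ and, on the non-coincidence set $\{-1<u^\ast<1\}$, it solves $\epsilon^2\Delta u^\ast+u^\ast=c$ for a constant $c=c(\lambda)$. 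Since $u^\ast$ is radial and non-increasing, the coincidence sets are $\{u^\ast=1\}=\overline{B_a}$ and $\{u^\ast=-1\}=\Omega\setminus B_b$ with $0\le a<b\le R_0-\delta$ (the degenerate case $u^\ast\equiv-1$ being immediate), so the non-coincidence set is the annulus $B_b\setminus\overline{B_a}$, on which the radial profile $g(r):=u^\ast(x)$, $r=|x|$, solves $\epsilon^2(g''+g'/r)+g=c$, whence $g=c+AJ_0(r/\epsilon)+BY_0(r/\epsilon)$. The flat branch $g\equiv c$ is impossible because continuity of $u^\ast$ across the free boundary would force $c\in\{-1,1\}$, contradicting $-1<c<1$; hence $A,B$ are not both zero and $g'$, a nonzero linear combination of $J_1(r/\epsilon)$ and $Y_1(r/\epsilon)$, vanishes only at finitely many radii, so the planar critical set has measure zero. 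This verifies the hypothesis of \cite{BrothersZiemer} and yields $u=u^\ast(\cdot-x_0)$ with $|x_0|\le(R_0-\delta)-r_1$, where $B_{r_1}=\{u^\ast>-1\}$, so the translation keeps the profile inside $\mathcal{B}_{R_0}(\delta)$; this is precisely the assertion.

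I expect the verification of the Brothers–Ziemer non-degeneracy condition to be the main obstacle: one must bring in enough regularity for the constrained minimizer --- more than Theorem \ref{oldreg} supplies for the evolution --- to make sense of the stationary Euler–Lagrange equation on the non-coincidence set, and then carefully exclude flat spots of the radial profile, including its behavior at the origin (where $g'(0)=0$ must be reconciled with $B=0$) and across the free boundary. Everything else is routine once the rearrangement is taken to be that of $u+1$ extended by zero, which is the choice that simultaneously preserves the potential term, the mean mass, and the boundary condition $u\equiv-1$ near $\partial\Omega$.
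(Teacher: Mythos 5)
Your proposal is correct and follows essentially the same route the paper intends: the paper's entire justification for this theorem is the phrase ``considerations of energy symmetrization and regularity'' together with the citation of \cite{BrothersZiemer}, i.e.\ exactly the Schwarz symmetrization of $u+1$, the Pólya--Szegő inequality, and the Brothers--Ziemer rigidity case that you spell out. Your write-up in fact supplies the substance the paper omits --- the correct choice of rearranged quantity, the preservation of the mass and boundary constraints, and above all the verification of the Brothers--Ziemer non-degeneracy hypothesis via the Bessel-function form of the radial Euler--Lagrange equation on the non-coincidence set --- and I see no gap in it.
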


\smallskip\par
\begin{remark} \label{symmetry}
Since the energy $E(t)$, as well as the problem formulated in (DQOP), are invariant under the transformation $u \rightarrow -u$, if
we  set $u$ to equal $+1$ rather than $-1$  in the $\delta-$annular neighborhood of $\partial \Omega$, then the conclusion in Theorem \ref{minimizersDQOP} would
have yielded that the minimum energy steady states are monotonically decreasing, modulo translation within $\mathcal{B}_{R_0}(\delta)$. Without the constraint that $u$ equals $\pm 1$ in a $\delta-$annular neighborhood of $\partial \Omega$,  the energetics of possible additional steady states would need to be considered. Such additional steady states would include certain energy minimizing  steady state solutions with ``droplet like'' $\pm 1$ concentrations
along the boundary, with lower energy than the axi-symmetric energy minimizing steady states with the same mean mass and their translates, discussed above. For simplicity, we focus here on a more limited set of steady states, which
provide insight into the more general case.
\end{remark}

\smallskip\par
It follows from Theorem \ref{minimizersDQOP} and Remark \ref{symmetry} that we should consider the set of axi-symmetric monotonically increasing steady states for (DQOP)
and their translates that lie within $\mathcal{B}_{R_0}(\delta)$. Since we are looking for constrained mean mass minimizers, we should explore the set of
the monotonically decreasing axi-symmetry steady states with prescribed mean mass, $\bar{u}.$  This is undertaken in detail in the two subsections that follow. If a steady state $u \in \mathcal{K}$ equals
$-1$ in a $\delta-$annular neighborhood of $\partial \Omega$ and increases (non-decreases) monotonically, then
either $(i)$ $u=1$ is attained in a circular neighborhood of the origin or $(ii)$ $u \in [-1,1)$ in $\Omega$ except perhaps at the origin.
In case $(i)$, the steady states contain unique monotonically decreasing annular transition region, see Fig.~\ref{fig1}a and
 Section \ref{subsecAR} for details. In case $(ii)$,  ``dimple solutions'' are possible,  with $-1< u \le 1$ at the origin and with $-1 \le u < 1$ elsewhere; this possibility is explored
 in Section \ref{subsecDR}. See Fig.~\ref{fig1}b.

\smallskip
Before exploring the details of the radial solutions, let us recall that we wish to connect the solutions of (DQOP) with solutions of (SD). In considering the Gamma limit,
the set of (DQOP) solutions are compared with (SD) solutions with similar mass. While solutions to (DQOP) depend on the parameter $\epsilon$, the mean mass constraint is independent of $\epsilon$.
Let us recall (\ref{eqnAu}). If $\bar{u}_0 =-1$, then trivially, $u \equiv -1$. If $\bar{u}_0=+1$, then $u \equiv +1$ is implied by (\ref{eqnAu}), which does not yield a possible solution
due to the requirement that $u=-1$ in a $\delta-$annular neighborhood of $\partial \Omega$. If $\bar{u}_0 \in (-1,1)$ and $\bar{u}_0$ is not too close to $-1$, then  the monotone axi-symmetric steady state solutions to (DQOP)   can be expected to Gamma converge to $-1$ outside the origin in an annulus with width $R_0-r_0$,
and to converge to $+1$ in the disk centered at  the origin with radius $r_0$, with the circular curve $\Gamma$ with radius $r_0$ constituting an axi-symmetric steady state solution to (SD).
Recalling (\ref{eqnAu}), we get that if $\Omega$ is a disk with radius $R_0,$ then
\begin{equation} \label{meanmass_A}
\bar{u}_0=\frac{1}{|\Omega|}(2 \mathcal{A}_0 -|\Omega|) = \frac{2 r_0^2 - R_0^2}{R_0^2}.
\end{equation}
Based on (\ref{meanmass_A}), for monotone steady state axisymmetric solutions of (DQOP) with $\bar{u} \in (-1,1)$,   an $\epsilon$ independent \textbf{equivalent mean mass condition}
is implied,
namely

\smallskip\par
\begin{proposition} \label{Prop1} \textbf{Equivalent mean mass.}
Let $u$ be a monotonically decreasing axi-symmetric steady state solution to (DQOP) with $\bar{u}=\bar{u}_0\in (-1,1)$ which lies within $\Omega=\mathcal{B}_{R_0}(\delta)$, where
$\mathcal{B}_{R_0}(\delta)$ denotes the disk with radius $R_0$ centered at the origin which has a bounding annular neighborhood with width $\delta$ where $u \equiv -1.$
Then $u$ is radial, $u=u(r)$, and
 \begin{equation} \label{emmc}
\int_0^{R_0} u(r) r \, dr = \frac{1}{2} \bar{u} R_0^2 = \frac{1}{2}(2 r_0^2 -R_0^2),
\end{equation}
for some $0 < r_0 < R_0 - \delta.$
\end{proposition}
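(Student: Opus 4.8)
\smallskip\par\noindent
\emph{Sketch of proof.}
The plan is to reduce the identity in (\ref{emmc}) to an elementary change of variables together with the $\epsilon$-independent relation (\ref{meanmass_A}), and then to extract the admissible range of $r_0$ from the $\delta$-annular boundary layer. Since ``axi-symmetric'' here means invariant under rotations about the centre of $\Omega$, the assertion that $u$ is radial, $u=u(r)$ with $r=|x|$, is immediate; the monotonicity hypothesis will enter only through the sign bounds it supplies, and in fact only $u\in\mathcal{K}$ together with $u\equiv-1$ near $\partial\Omega$ are really needed below. Passing to polar coordinates,
\begin{equation} \nonumber
\bar u=\frac{1}{|\Omega|}\int_\Omega u\,dx=\frac{1}{\pi R_0^2}\int_0^{2\pi}\!\!\int_0^{R_0}u(r)\,r\,dr\,d\theta=\frac{2}{R_0^2}\int_0^{R_0}u(r)\,r\,dr ,
\end{equation}
and multiplying through by $\tfrac12R_0^2$ yields the first equality in (\ref{emmc}).

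For the second equality I would simply \emph{define} $r_0:=R_0\bigl((1+\bar u)/2\bigr)^{1/2}$, equivalently $2r_0^2-R_0^2=\bar u R_0^2$, which is precisely the mean-mass relation (\ref{meanmass_A})--(\ref{eqnAu}) naming $\pi r_0^2$ as the area of the prospective $\{u\equiv+1\}$ component; substituting this into the first equality gives $\int_0^{R_0}u(r)\,r\,dr=\tfrac12(2r_0^2-R_0^2)$. It then remains to locate $r_0$ in $(0,R_0-\delta)$. Since $\bar u\in(-1,1)$ we have $(1+\bar u)/2\in(0,1)$, hence at once $0<r_0<R_0$. For the sharper upper bound I would peel off the boundary layer: because $u\equiv-1$ on $\{R_0-\delta<r<R_0\}$ and $u\le1$ on $\mathcal{K}$,
\begin{equation} \nonumber
\tfrac12\bar u R_0^2=\int_0^{R_0-\delta}u(r)\,r\,dr-\tfrac12\bigl(R_0^2-(R_0-\delta)^2\bigr)\le (R_0-\delta)^2-\tfrac12R_0^2 ,
\end{equation}
that is, $r_0^2\le(R_0-\delta)^2$. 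To make this strict I would use that $u$ is at least continuous along rays away from the origin (a radial $H^1$ function is absolutely continuous on $(0,R_0]$, and by Theorem \ref{oldreg} a genuine steady state lies in $H^2(\Omega)$, hence is continuous on $\overline{\Omega}$ in two dimensions): continuity forces $u(R_0-\delta)=-1$, so $u<1$ on a set of positive measure inside $\{0<r<R_0-\delta\}$, which makes the displayed inequality strict and gives $r_0<R_0-\delta$.

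The computation itself is routine, so the only point that requires care is interpretational rather than analytic. Equation (\ref{emmc}) should be read as an identity between the fixed ($\epsilon$-independent) datum $\bar u$ and the \emph{number} $r_0:=R_0((1+\bar u)/2)^{1/2}$, so that the ``two-phase'' reading of $r_0$ via (\ref{eqnAu})--(\ref{meanmass_A}) is used only to name $r_0$ and not to claim that $u$ already has the sharp-interface profile --- the latter is only the $\epsilon\downarrow0$ Gamma limit. With that understood, the only genuine step is the boundary-layer estimate giving $0<r_0<R_0-\delta$, and the sole place where regularity of $u$ enters (beyond $u\in\mathcal{K}$) is in upgrading that estimate to a strict inequality.
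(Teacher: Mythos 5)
Your argument is correct and is essentially the route the paper takes: the paper offers no explicit proof of Proposition \ref{Prop1}, presenting it as immediate from (\ref{meanmass_A}) — i.e.\ the first equality in (\ref{emmc}) is polar coordinates and the second is just the \emph{definition} $r_0:=R_0[(1+\bar u)/2]^{1/2}$, exactly as you read it (your normalization matches (\ref{meanmass_A}) and (\ref{annular_mm}), which is the one used throughout, rather than the sign printed in (\ref{eqnAu})). The only step the paper leaves entirely implicit is the localization $0<r_0<R_0-\delta$, and your boundary-layer estimate together with the continuity/$H^1$ argument for strictness supplies it correctly.
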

\bigskip
\begin{figure}[!ht]
\includegraphics[width=.95\textwidth]{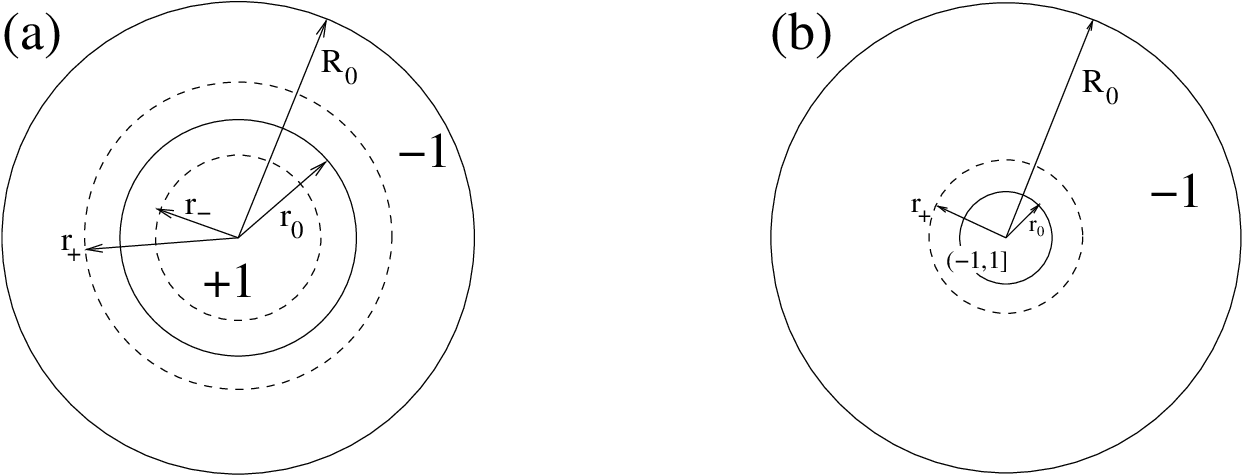}
\caption{a) Schematic portrayal of an annular  solution: $u(r)\equiv 1$ for $0\le r \le r_-$,  $-1 < u(r)<1$ for $r_- < r< r_+$, and $u(r)\equiv -1$ for $r_+ \le r < R_0$.
b) Schematic portrayal of a dimple  solution: $a:=u(0)=\lim_{r \downarrow 0} u(r) \in (-1, 1]$ with $-1 <u(r)\le a$ for $0\le r<r_+$, and $u(r) \equiv -1$ for $r_+ \le r < R_0.$
}\label{fig1}
\end{figure}
\bigskip
\smallskip
Let us consider Fig.~\ref{fig1}. Given $\mathcal{B}_{R_0}(\delta)$, it follows from    (\ref{funcDQOP}) and (\ref{meanmass_A})
for $0< \epsilon \ll 1$, that if $\bar{u}_0$ is not too close to $-1$,
then the solution can be expected to contain an annular transition region, $r_- < r< r_+$, with  $0 < r_- < r_0 < r_+ < R_0-\delta$ and $r_+ - r_- = O(\epsilon)$, where $u(r)=-1$ for $r \in [r_+, R_0]$
and $u(r)=+1$ for $r \in [0, r_-]$; we will refer to such solutions as ``annular'' solutions. If, on the other hand, $\bar{u}_0$ is sufficient close to $-1$, then any possible (admissible) monotone radial solutions will
have $u=-1$ in an annular region, $r_+ \le r \le R_0$, with $0<r_0<r_+<R_0$, such that $u \in (-1, 1)$ for $0<r< r_+$ and $u(0)=\lim_{r \downarrow 0} u(r) \in (-1, 1]$; we will refer to such solutions as ``dimple'' solutions.
In either case, in accordance with (DQOP) and Theorem \ref{oldreg} (see \cite{deF}), $u \in C^1(\mathcal{B}_{R_0}(\delta))$  and outside of the annular or circular regions where $u \equiv \pm 1$,  $u \in C^2$.

Accordingly, the annular solution should satisfy
\begin{equation}\label{annular_r}
\begin{split}
& u+\lambda=-\e^2(u_{rr}+\frac{1}{r}u_r), \qquad  r_-<r<r_+,\\
& u(r_-)=1 \ , \quad u(r_+)=-1,\\
& u_r(r_-)=0\ , \quad u_r(r_+)=0,
\end{split}\end{equation}
for some $r_-$, $r_+$, with $0< r_- < r_0 < r_+< R_0$,
 and for some $\lambda \in R.$ Given the boundary conditions in (\ref{annular_r}), $\lambda$ can be considered here
 as a constant of integration for (\ref{SDi}) with vanishing normal velocity.  By testing the equation in (\ref{annular_r}) by $u(r)$, it is readily seen that $\lambda$ can also
 be viewed as a mean mass conserving Lagrange multiplier for the free energy $E(t)$; see (\ref{lambda_a})  as well as (\ref{lambda_d}) in the sequel.
Note that by (\ref{emmc})
\begin{equation} \label{mean_annulus}
\bar{u}=\frac{(2 r_0^2 - R_0^2)}{R_0^2}=\frac{2}{R_0^2} \int_0^{R_0} u(r)r \,dr,
\end{equation}
and from the prescribed structure of annular solutions
$$\int_0^{R_0} u(r)r\, dr =\int_{r_-}^{r_+} u(r)r\,dr + \frac{1}{2}(r_+^2 + r_-^2 -R_0^2).$$
Using (\ref{annular_r}),
$$\int_{r_-}^{r_+} u(r)r\,dr= -\int_{r_-}^{r_+} [(ru_r)_r - \lambda r] \,dr =\frac{\lambda}{2}(r_+^2 - r_-^2).
$$
Hence the equivalent mean mass condition holds for (\ref{annular_r}) with (\ref{mean_annulus}) if and only if
\begin{equation} \label{lambda_a}
\lambda = \frac{r_+^2 + r_-^2 - 2 r_0^2}{r_+^2 -r_-^2}.
\end{equation}

Similarly, the dimple solution should satisfy
\begin{equation}\label{dimple_r}
\begin{split}
& u+\lambda=-\e^2(u_{rr}+\frac{1}{r}u_r), \qquad  0<r<r_+,\\
& -1<u(0) \le 1 \ , \quad u(r_+)=-1,\\
& u_r(0)=0\ , \quad u_r(r_+)=0,
\end{split}\end{equation}
for some $r_+$, with $0 < r_0<r_+<R_0,$ and for some $\lambda \in R$.  Again,
 by (\ref{emmc}) in Proposition~\ref{Prop1}
\begin{equation} \label{mean_dimple}
\bar{u}=\frac{(2 r_0^2 - R_0^2)}{R_0^2}=\frac{2}{R_0^2} \int_0^{R_0} u(r)r \,dr,
\end{equation}
and from the prescribed structure of dimple solutions
$$\int_0^{R_0} u(r)r\, dr =\int_{0}^{r_+} u(r)r\,dr + \frac{1}{2}(r_+^2  -R_0^2).$$
 Using (\ref{dimple_r}),
$$\int_{0}^{r_+} u(r)r\,dr= -\int_{0}^{r_+} [(ru_r)_r - \lambda r] \,dr =\frac{\lambda}{2}\,r_+^2.
$$
Hence the equivalent mean mass condition holds for  (\ref{dimple_r}) with (\ref{mean_dimple}) if and only if
\begin{equation} \label{lambda_d}
\lambda = \frac{r_+^2  - 2 r_0^2}{r_+^2}.
\end{equation}

\smallskip
Before going into the details of the annular and dimple (DQOP) solutions in the next subsections, we pause to point out that it is possible to attain a large class of additional
solutions by appropriately pasting together translates of dimple and annular solutions of various sizes, so long as the resulting construction lies in $\mathcal{K}$, for some  $\bar{u} \in (-1,1).$
See, for example, the concentrically ringed solutions identified by X.~Chen \cite{XChen} as the asymptotic  limit of solutions to the Cahn-Hilliard equation.

\subsection{Annular solutions for (DQOP)} \label{subsecAR}

We now consider radial ``annular''  monotonically decreasing solutions in $\Omega=\mathcal{B}_{R_0}(\delta)$, containing a transition between the values $\pm 1$, see Fig.~\ref{fig1}a.
More specifically we assume that  $u(r)\equiv +1$ for $r\in [0, r_-],$  $-1<u(r) < 1$ for $r \in (r_-, r_+)$, and $u(r) \equiv -1$ for $r\in [r_+, R_0),$
where $r_\pm$ reflect the location of free boundaries with $0< r_-< r_0 <r_+ < R_0 - \delta.$
 It is possible to seek non-monotone solutions, but
these would have more energy and here we are looking for energy minimizing steady states.
 Generically we may assume that $r_0=O(1)$, although, as we shall see, certain values of $r_0$ with $r_0 = O(\epsilon)$
are also possible. We shall discuss the generic case first, and treat the general case afterwards.

 As we saw  in Section \ref{subsect2.2}, $u(r)$ should satisfy
\begin{equation} \nonumber
\begin{split}
& u+\lambda=-\e^2(u_{rr}+\frac{1}{r}u_r), \qquad  r_- <r<r_+,\\
& u(r_-)= 1, \quad u(r_+)=-1,\\
& u_r(r_-)=0, \quad u_r(r_+)=0,
\end{split}\end{equation}
where  the equivalent mean mass condition holds if and only if
\begin{equation} \label{annular_mm}
\bar{u}=\frac{(2 r_0^2 - R_0^2)}{R_0^2}, \quad
\lambda = \frac{r_+^2 + r_-^2  - 2 r_0^2}{r_+^2 - r_-^2}, \quad r_0=\Bigl[\frac{1+ \bar{u}}{2}\Bigr]^{1/2} R_0,
\end{equation}
for some $0 < r_0 <r_+$, see (\ref{annular_r})--(\ref{lambda_a}).
Thus $\lambda=\lambda(r_-, r_+, r_0)$ and $r_0=r_0(\bar{u}, R_0).$  We shall see that annular solutions exist for  $-1 + O(\epsilon^2) < \bar{u} < 1 -4\delta/R_0 + O(\delta^2, \epsilon)$ with
$O(\epsilon) < r_0 < R_0 - \delta + O(\epsilon);$  some further specifics to follow.

The parameters, $r_\pm$, reflecting the location of the free boundaries, are to be determined. Thus, we have four boundary conditions,
as well as the monotonicity and range constraints.
In solving (\ref{annular_r})--(\ref{lambda_a}), we have four degrees of freedom, two from  the second order ODE in (\ref{annular_r}) and two from  the parameters  $r_\pm,$ with  $\lambda=\lambda(r_-, r_+, r_0)$. Hence, one would expect
the possible annular solutions to be uniquely determined by
 $\bar{u}$ (or, equivalently, by $r_0$ and $R_0$).

Setting
\begin{equation}\label{rescaling_a}
q=\frac{r}{\e},  \ \quad
     q_0=\frac{r_0}{\e}, \quad
    q_\pm =\frac{r_\pm}{\e}, \quad Q_0=\frac{R_0}{\e}, \quad \hbox{and} \quad v(q):=[u(r)+\lambda]|_{r=\epsilon q},
\end{equation}
    we get the following problem for $v(q)$,
\begin{equation}\label{annular_v}
\begin{split}
& qv_{qq}+v_q+qv=0, \quad 0<\qm<q<\qp,\\
& v(\qm)=1+\lambda,  \quad  v(\qp)=-1+\lambda,\\
& v_q(\qm)=0, \quad v_q(\qp)=0,
\end{split}\end{equation}
with
\begin{equation} \label{lambda_v}
\lambda=\frac{q_+^2 + q_-^2 - 2 q_0^2}{q_+^2 - q_-^2}.
\end{equation}
The equation in (\ref{annular_v}) is a Bessel equation of order zero, \cite[10.2.1]{dlmf},
whose general solution (\cite[10.2(iii),10.6.3]{dlmf}) is
\begin{equation}\label{BesselJY}
v(q)=c_1 J_0(q)+c_2 Y_0(q), \quad q>0, \quad\quad c_1, c_2 \in R,
\end{equation}
with
\begin{equation}\nonumber
v_q(q)=-c_1 J_1(q)-c_2 Y_1(q), \quad q >0.
\end{equation}
The coefficients $c_1,$ $c_2$, in the solution to (\ref{annular_v}), should depend on $q_\pm$ and $\lambda,$  which in turn depend on $\bar{u}$ (or equivalently on $r_0$), as well as on the underlying parameters $R_0$ and $\e$.


\smallskip
Using the polar representation for Bessel functions, \cite[10.18.4, 10.18.6, 10.18.7]{dlmf},
we get that
\begin{align}
& v(q)=c_1 J_0(q)+c_2 Y_0(q) =
   M_0(q)\big( c_1 \cos \theta_0(q) + c_2 \sin \theta_0(q) \big), \quad q > 0, \label{v1}\\
& -v_q(q)=c_1 J_1(q)+c_2 Y_1(q) =
   M_1(q)\big( c_1 \cos \theta_1(q) + c_2 \sin \theta_1(q) \big), \quad q >0, \label{vq1}
\end{align}
where for $n=0,1,$  $\theta_n(x):=\arctan(Y_n(x)/J_n(x))$ for $x>0,$ and
$$
M_n(x):=\sqrt{J_n^2(x)+Y_n^2(x)}>0, \quad x>0,
$$
since the Bessel functions $J_n,Y_n$ do not vanish simultaneously.

Setting
\begin{equation} \label{c1c2A}
\cos\p=\frac{c_1}{\sqrt{c_1^2+c_2^2}}, \quad
\sin\p=\frac{c_2}{\sqrt{c_1^2+c_2^2}}, \quad \hbox{and}\quad A:=\frac{1}{\sqrt{c_1^2+c_2^2}},
\end{equation}
we can write (\ref{v1}),(\ref{vq1}) as
\begin{equation}\label{AMcq}
v(q)= A M_0(q) \cos \big(\theta_0(q) - \p \big), \quad
v_q(q)= -A M_1(q) \cos \big(\theta_1(q) - \p \big), \quad q > 0.
\end{equation}

Since $M_1$ and $A$ are positive, the boundary conditions  $v_q(\qm)=v_q(\qp)=0$ in (\ref{annular_v}) imply that
\begin{equation}
 \cos \big(\theta_1(\qm) - \p \big)=\cos \big(\theta_1(\qp) - \p \big)=0.
\end{equation}
As we are seeking monotone solutions, $v(q)$ should be monotonically decreasing  with $v_q<0$ for $q_-<q<q_+$ by Sturmian theory. The positivity of $M_1$ and $A$ now
implies that  $\cos \big(\theta_1(q) - \p \big)>0$ for $q_-<q< q_+$, and thus that
\begin{equation} \label{arange_a}
\frac{-\pi}{2}  < \theta_1(q) - \p < \frac{\pi}{2}, \quad
                  0<\qm<q<\qp,
\end{equation}
up to possible translations by $2 k \pi$,  $k \in Z,$ which do not effect the solutions. The function $\theta_1(q)$ is monotonically increasing (\cite[10.18.18]{dlmf});
therefore
\begin{equation}\label{theta1_phi}
\theta_1(\qm)-\p=-\frac{\pi}{2}\ , \quad
\theta_1(\qp)-\p=\frac{\pi}{2}, \quad \hbox{and}\quad \theta_1(\qp)-\theta_1(\qm)=\pi.
\end{equation}
Since $\lim_{q \downarrow 0} \theta_1(q)= - \frac{\pi}{2}$ (\cite[10.18.3]{dlmf}),  the monotonicity of $\theta_1(q)$ and
(\ref{theta1_phi}) imply that
\begin{equation} \label{theta1_phi2}
\theta_1(q_+)> \frac{\pi}{2}, \quad \theta_1(q_-) > - \frac{\pi}{2}, \quad \p >0.
\end{equation}

\subsubsection{Annular solutions, the generic case}\label{subsecGC}

We remarked earlier that  $-1 + O(\epsilon^2) < \bar{u} < 1  -4\delta/R_0 + O(\delta^2, \epsilon)$ with
$O(\epsilon) < r_0 < R_0 - \delta + O(\epsilon)$. Thus,  generically $r_0=O(1)$.  Considerations of scaling and energy minimization of the energy
prescribed in (\ref{funcDQOP}) for $0 <\epsilon
 \ll 1$  imply that transition widths between the phases for energy minimizing solutions scale as $O(\e)$, see e.g. \cite{NC_Meccanica,PSternberg88},
and hence, $r_+ - r_- = O(\e)$ and $r_-$, $r_+$ are $O(1)$ when $r_0 = O(1)$. This will also be demonstrated directly in Section \ref{EandU}.
Accordingly, the rescalings in (\ref{rescaling_a})  imply that
\begin{equation} \label{generic_q}
q_-, \, q_0,\, q_+ \, = \,O(\e^{-1}), \quad q_+ - q_- =O(1), \quad \epsilon \, q_0 = O(1)
\end{equation}
in the generic case. Throughout this subsection, we  assume that (\ref{generic_q}) holds.

\bigskip
For large values of $x$ (see \cite[10.18.18]{dlmf}),
\begin{equation}
\theta_1(x)=x-\frac{3\pi}{4}+\frac{3}{8x}+O\left(\frac{1}{x^3}\right), \quad x \gg 1.
\end{equation}
Hence, by (\ref{theta1_phi}), for large values of $\qm,$ $\qp,$
\begin{equation}\label{q_large_l}
\qm -\frac{3\pi}{4}+\frac{3}{8 \qm}+O\left(\frac{1}{q_0^3}\right) - \p=
 - \frac{\pi}{2},
\quad
\qp-\frac{3\pi}{4}+\frac{3}{8\qp}+O\left(\frac{1}{q_0^3}\right) - \p=
 \frac{\pi}{2},
 \end{equation}
and thus
\begin{equation}\label{eq:eq38}
\qp-\qm = \pi + \frac{3}{8}\left(\frac{1}{\qm}-\frac{1}{\qp}\right) +
                 O\left(\frac{1}{q_0^3}\right).
\end{equation}

The boundary conditions $v(\qm)= 1+\lambda$ and $v(\qp)=-1 + \lambda$ in (\ref{annular_v}) imply that
\begin{equation}\label{eq:eq39}
  A M_0(\qm) \cos \big(\theta_0(\qm) - \p \big)=1+\lambda, \quad
A M_0(\qp) \cos \big(\theta_0(\qp) - \p \big)=-1+\lambda.
\end{equation}
Subtracting the equations in (\ref{eq:eq39}),
\begin{equation}\label{eq:eq40}
M_0(\qm) \cos \big(\theta_0(\qm) - \p \big) -
M_0(\qp) \cos \big(\theta_0(\qp) - \p \big)=\frac{2}{A}.
\end{equation}
For large values of $x$ (see \cite[10.18.18]{dlmf}),
\begin{equation}\nonumber
\theta_0(x)=x-\frac{\pi}{4}-\frac{1}{8x}+O\left(\frac{1}{x^3}\right), \quad x \gg 1.
\end{equation}
Hence using (\ref{q_large_l})
\begin{equation}\nonumber
\theta_0(\qm)-\p =
           -\frac{1}{2\qm}+O\left(\frac{1}{q_0^3}\right), \quad \theta_0(\qp)-\p ={\pi} -\frac{1}{2\qp} +O\left(\frac{1}{q_0^3}\right),
           \end{equation}
           and therefore
\begin{equation} \label{qlarge32}
\cos(\theta_0(\qm)-\p) =1- \frac{1}{8\qm^2}
                                     +O\left(\frac{1}{q_0^3}\right), \quad
\cos(\theta_0(\qp)-\p)
    =-1+\frac{1}{8\qp^2}+O\left(\frac{1}{q_0^3}\right).
\end{equation}
For large values of $x$ (see \cite[10.18.17]{dlmf}),
\begin{equation}\label{q_large_33}
M_0(x)=\sqrt{\frac{2}{\pi x}}+O\left(\frac{1}{x^{5/2}}\right), \quad x \gg 1.
\end{equation}
Using (\ref{qlarge32}), (\ref{q_large_33}) in (\ref{eq:eq39}),
\begin{equation} \nonumber
\sqrt{\frac{2}{\pi \qm}} \cdot \left( 1- \frac{1}{8\qm^2} \right) -
\sqrt{\frac{2}{\pi \qp}} \cdot \left( -1+\frac{1}{8\qp^2} \right)
             + O\left(\frac{1}{q_0^{5/2}}\right) = \frac{2}{A},
\end{equation}
which implies that
\begin{equation}\label{q_large_A}
A=\frac{\sqrt{2\pi}}{\frac{1}{\sqrt{\qm}} + \frac{1}{\sqrt{\qp}}}
   +O\left(\frac{1}{q_0^{3/2}}\right).
\end{equation}

Returning  to \eqref{eq:eq39} and  summing the two equations,
\begin{equation}\nonumber
A M_0(\qm) \cos \big(\theta_0(\qm) - \p \big) +
A M_0(\qp) \cos \big(\theta_0(\qp) - \p \big)=2\lambda,
\end{equation}
and then using the approximations in (\ref{qlarge32}), (\ref{q_large_33}),
\begin{equation}\nonumber
\left(\frac{\sqrt{2\pi}}{\frac{1}{\sqrt{\qm}} + \frac{1}{\sqrt{\qp}}}
   +O\left(\frac{1}{q_0^{3/2}}\right)\right)
\left( \sqrt{\frac{2}{\pi \qm}} - \sqrt{\frac{2}{\pi \qp}}
            + O\left(\frac{1}{q_0^{5/2}}\right) \right) = 2\lambda,
\end{equation}
which implies that
\begin{equation} \label{q_large_lapprox}
\lambda = \frac{\sqrt{\qp}-\sqrt{\qm}}{\sqrt{\qp}+\sqrt{\qm}}
                               +O\left(\frac{1}{q_0^2}\right).
\end{equation}
Using now \eqref{eq:eq38},
\begin{equation} \nonumber
\lambda =
   \frac{\pi+O\left(\frac{1}{q_0}\right)}{(\sqrt{\qm+\pi}+\sqrt{\qm})^2}
= O\left(\frac{1}{q_0}\right),
\end{equation}
%
from which we get that
\begin{equation} \label{q_large_42}
\qm = \frac{\pi(1-2\lambda)}{4\lambda}+O\left(\frac{1}{q_0}\right), \quad
\qp = \frac{\pi(1+2\lambda)}{4\lambda}+O\left(\frac{1}{q_0}\right),
\end{equation}
which implies that $q_\pm=O(q_0)$ and $q_+  -  q_-=O(1)$, in accordance with (\ref{generic_q}).

Using (\ref{q_large_42}) and (\ref{rescaling_a}) in the expression for $\lambda$ given in (\ref{annular_mm})
and noting that (\ref{q_large_lapprox}) implies that $\lambda >0$, then
solving for $\lambda$, we obtain that
\begin{equation} \label{q_large_lambdaf}
\lambda=  \frac{\pi}{4q_0}+O\left(\frac{1}{q_0^2}\right).
\end{equation}
%
Substituting the above expression for $\lambda$ into (\ref{q_large_42}) we get that
\begin{equation}\nonumber
\qm = q_0-\frac{\pi}{2}+O\left(\frac{1}{q_0}\right) \quad
\qp = q_0+\frac{\pi}{2}+O\left(\frac{1}{q_0}\right).
\end{equation}

In order to obtain an approximate solution $v(q)$ to (\ref{annular_v}), and subsequently to obtain an approximate
solution $u(r)=v(r/\epsilon)-\lambda$ to (\ref{annular_r}), based, say on the expression in (\ref{BesselJY}), in the generic case, it remains
to identify approximations for the coefficients, $c_1$, $c_2$.

The boundary conditions $v(q_+)=-1 + \lambda$ and $v_q(q_+)=0$ in (\ref{annular_v}),
imply that
\begin{equation} \label{c1c2}
c_1=\frac{(-1 + \lambda) Y_1(q_+)}{J_0(q_+) Y_1(q_+) + J_1(q_+) Y_0(q_+)}, \quad c_2=\frac{(-1 + \lambda) Y_0(q_+)}{J_0(q_+) Y_1(q_+) + J_1(q_+) Y_0(q_+)}.
\end{equation}

%
%
%

We know  (see \cite[10.17.3, 10.17.4]{dlmf}) that
\begin{equation}\nonumber
\begin{split}
& J_{0}(x)=\sqrt{\frac{2}{\pi x}}
        \left(\cos\left(x-\frac{\pi}{4}\right)
           +O\left(\frac{1}{x}\right)\right), \quad  x\gg 1, \\
& Y_{0}(x)=\sqrt{\frac{2}{\pi x}}
        \left(\sin\left(x-\frac{\pi}{4}\right)
           +O\left(\frac{1}{x}\right)\right), \quad  x\gg 1,
\end{split}\end{equation}
and that
\begin{equation}\nonumber
\begin{split}
& J_{1}(x)=\sqrt{\frac{2}{\pi x}}
        \left(\cos\left(x-\frac{3\pi}{4}\right)
           +O\left(\frac{1}{x}\right)\right), \quad  x \gg 1, \\
& Y_{1}(x)=\sqrt{\frac{2}{\pi x}}
        \left(\sin\left(x-\frac{3\pi}{4}\right)
           +O\left(\frac{1}{x}\right)\right), \quad  x\gg 1.
\end{split}\end{equation}
Using the approximations above in (\ref{c1c2}),
\begin{equation}
c_1 =  -\sqrt{\frac{q_0 \pi}{2}}\, \frac{[\sin(q_0- \pi/4)]}{ \sin(2 q_0)} + O\left(\frac{1}{q_0}\right),\quad c_2 =  -\sqrt{\frac{q_0 \pi}{2}}\, \frac{ [\cos(q_0 - \pi/4)]}{ \sin(2 q_0)} + O\left(\frac{1}{q_0}\right),
\end{equation}
and then returning to (\ref{BesselJY}), we obtain  that
\begin{equation} \label{v_generic}
v(q)=  - \sqrt{\frac{q_0}{q}}\, \frac{\cos(q+q_0)}{\sin(2 q_0)} + O\left(\frac{1}{q_0}\right), \quad q_- < q < q_+ \; ( q_0-\frac{\pi}{2}    < q < q_0 + \frac{\pi}{2} \;) .
\end{equation}
Recalling   (\ref{rescaling_a}),
\begin{equation} \label{u_generic}
u(r) = - \sqrt{\frac{r_0}{r}} \, \frac{\cos((r+r_0)/\epsilon)}{\sin(2 r_0/{\epsilon})}  + O\left(\frac{\epsilon}{r_0}\right), \quad r_- < r <r_+.
\end{equation}

\subsubsection{Existence and Uniqueness}\label{EandU}

In this section, we prove
\smallskip
\begin{theorem} \label{existence_r}
Given $\epsilon$, $\delta$, and $R_0$,  $0<\epsilon \ll \delta \ll 1,$  $R_0=O(1)$.
There exists a unique solution to (\ref{annular_r}), (\ref{lambda_a}) in $\mathcal{B}_{R_0}(\delta)$
for $r_0 \in (r_0^{\inf}, \, r_0^{\sup}]$, where
$r_0^{\inf}=\bar{q}/\epsilon$ with $\bar{q}:=\theta_1^{-1}(\pi/2)$ and $r_0^{\sup}=R_0-\delta +O(\epsilon).$
\end{theorem}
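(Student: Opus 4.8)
\smallskip\noindent\emph{Plan of proof.}
The strategy is to parametrize all monotone annular profiles by the single phase shift $\p$ of (\ref{c1c2A}), so that the free-boundary problem (\ref{annular_r})--(\ref{lambda_a}) collapses to a scalar equation $r_0=\rho(\p)$, and then to show that $\rho$ is a strictly increasing homeomorphism of the admissible $\p$-interval onto $(r_0^{\inf},r_0^{\sup}]$.

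First I would carry the reduction through completely. By (\ref{theta1_phi})--(\ref{theta1_phi2}), for a monotone profile the Neumann conditions $v_q(\qm)=v_q(\qp)=0$ force $\theta_1(\qm)=\p-\tfrac{\pi}{2}$ and $\theta_1(\qp)=\p+\tfrac{\pi}{2}$; since $\theta_1$ is a continuous strictly increasing bijection of $(0,\infty)$ onto $(-\tfrac{\pi}{2},\infty)$ (\cite[10.18.3, 10.18.18]{dlmf}), for each $\p>0$ these relations determine $\qm(\p)<\qp(\p)$ uniquely, smoothly, and strictly increasingly. Adding and subtracting the two remaining conditions (\ref{eq:eq39}) — as in the derivation of (\ref{eq:eq40}) — expresses $A$ and $\lambda$ through $P:=M_0(\qm)\cos(\theta_0(\qm)-\p)$ and $N:=M_0(\qp)\cos(\theta_0(\qp)-\p)$, and, after inserting $\p=\theta_1(\qm)+\tfrac{\pi}{2}=\theta_1(\qp)-\tfrac{\pi}{2}$, the Bessel Wronskian $J_0 Y_1-J_1 Y_0=-\tfrac{2}{\pi x}$ (\cite[10.5.2]{dlmf}) simplifies them to $P=\tfrac{2}{\pi\qm M_1(\qm)}>0$ and $N=-\tfrac{2}{\pi\qp M_1(\qp)}<0$; hence $A>0$ and $\lambda\in(-1,1)$ automatically, and $u=v-\lambda$ lies in $\mathcal K$, is monotone, and has the regularity recorded after Proposition~\ref{Prop1} (via \cite{deF}). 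With $A,\lambda$ so fixed, the only condition still to be imposed is the mean-mass identity (\ref{lambda_v}); equating it with the above expression for $\lambda$ and simplifying reduces the whole problem to
$$q_0^{2}=\Psi(\p):=\frac{g(\qm)\,\qm^{2}+g(\qp)\,\qp^{2}}{g(\qm)+g(\qp)}, \qquad g(q):=\frac{1}{q\,M_1(q)}, \qquad q_0=\tfrac{r_0}{\e} .$$
Since $q_0^2$ comes out as a genuine convex combination of $\qm^2$ and $\qp^2$, the ordering $r_-<r_0<r_+$ is automatic, and no further admissibility check is needed beyond $\e\,\qp(\p)\le R_0-\delta$.

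The main obstacle is to prove that $\Psi$ is continuous and \emph{strictly increasing} on the admissible $\p$-interval. Continuity is immediate; for monotonicity I would differentiate $\Psi$, using $\tfrac{d}{d\p}q_\pm(\p)=\tfrac{\pi}{2}\,q_\pm M_1(q_\pm)^{2}>0$ (from $\theta_1'(q)=\tfrac{2}{\pi q M_1(q)^{2}}$, \cite[10.18.17]{dlmf}) and the monotonicity and convexity properties of the Bessel modulus $M_1$ on $(0,\infty)$ (\cite[\S10.18]{dlmf}), and verify $\Psi'>0$. For the generic range $r_0=O(1)$ (large $\p$) this is already contained in Section~\ref{subsecGC}, whose expansions give $\qp-\qm\to\pi$ and $\Psi(\p)=(\p+O(1))^{2}$; the delicate case is the non-generic range $r_0=O(\e)$ (bounded $\p$), where the Bessel estimates must be handled with care. (An alternative, variational route is to show that among annular profiles of prescribed mean mass the energy $E$ has a unique critical point, but this too reduces to a one-parameter monotonicity statement.)

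Granting monotonicity, $\rho(\p):=\e\sqrt{\Psi(\p)}$ is a strictly increasing homeomorphism of the admissible $\p$-interval onto an interval $(r_0^{\inf},r_0^{\sup}]$, and its endpoints are pinned down by degeneration. As $\p\downarrow 0$ we get $\qm\downarrow 0$ and $\qp\downarrow\theta_1^{-1}(\tfrac{\pi}{2})=\bar q$, so $r_+=\e\,\qp\downarrow\e\,\bar q$ and the profile collapses to the dimple boundary solution $v=A J_0$ on $(0,\bar q)$ (with $u(0)=1$, $\qp=\bar q$ the first zero of $J_1$), which is no longer annular — this yields the open left endpoint $r_0^{\inf}$. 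As $\p$ increases, $\qp$ grows without bound until $r_+=\e\,\qp$ first reaches $R_0-\delta$; by the generic expansions $r_\mp=r_0\mp\tfrac{\pi\e}{2}+O(\e^{2})$ this occurs at $r_0=R_0-\delta+O(\e)=r_0^{\sup}$, which is attained — the closed right endpoint. Inverting $\rho$ then gives, for each $r_0\in(r_0^{\inf},r_0^{\sup}]$, a unique $\p$ and hence a unique annular solution, completing the proof.
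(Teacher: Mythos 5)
Your reduction is attractive and, as far as it goes, correct: parametrizing by $\p$ so that $\theta_1(q_\mp)=\p\mp\tfrac{\pi}{2}$, and using $M_0(x)M_1(x)\sin(\theta_0(x)-\theta_1(x))=\tfrac{2}{\pi x}$ to evaluate the Dirichlet data, does give $M_0(\qm)\cos(\theta_0(\qm)-\p)=\tfrac{2}{\pi \qm M_1(\qm)}$, $M_0(\qp)\cos(\theta_0(\qp)-\p)=-\tfrac{2}{\pi \qp M_1(\qp)}$, hence $\lambda=(g(\qm)-g(\qp))/(g(\qm)+g(\qp))\in(-1,1)$ and the scalar equation $q_0^2=\Psi(\p)$ exactly as you state. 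This is genuinely different from the paper, which fixes $q_0$, parametrizes by $t=\theta_1(q_0)-\p\in[-\pi/2,\pi/2]$, and locates the unique zero of a function $D(t)$ via a sign change (Lemma~\ref{Dpm}) plus $D'>0$ (Lemma~\ref{Dmonotone}). However, your argument has a genuine gap precisely where the paper does its real work: the strict monotonicity $\Psi'(\p)>0$ is \emph{asserted}, not proved, and it carries the entire burden of both existence (surjectivity onto the claimed $r_0$-interval) and uniqueness. It is not a routine verification: $\Psi$ is a convex combination of $\qm^2$ and $\qp^2$ whose weights $g(q_\pm)=1/(q_\pm M_1(q_\pm))$ vary with $\p$, so the fact that $q_\pm$ both increase does not by itself force $\Psi$ to increase. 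The paper's analogous step requires Nicholson's integral representation to establish $(M_0^2)'<0$ and $(M_1^2)'<0$ (Claim~\ref{MMdecrease}) before the derivative can be signed; you would need a comparable quantitative input on $M_1$ (e.g.\ monotonicity of $xM_1^2(x)$ or of $x^2M_1^2(x)$), and your appeal to unspecified ``monotonicity and convexity properties of $M_1$'' does not supply it, particularly in the non-generic regime of bounded $\p$ that you yourself flag as delicate.

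A secondary discrepancy: your left endpoint does not match the theorem's. At $\p\downarrow 0$ one has $\qm\downarrow 0$, $\qp\downarrow\bar q$, $g(\qm)\to\pi/2$, and the Wronskian at $\bar q$ gives $g(\qp)\to -\tfrac{\pi}{2}J_0(\bar q)$, so $\Psi(0^+)=\bar q^2\,\tfrac{-J_0(\bar q)}{1-J_0(\bar q)}\approx(2.05)^2$, i.e.\ $q_0^{\inf}\approx 2.05$, not $\bar q\approx 3.83$. (This value does coincide with the maximal dimple radius of Section~\ref{subsecDR}, so your degeneration picture is internally consistent, but it is not the interval $(\bar q,\infty)$ of Proposition~\ref{existu_q}: the paper's restriction $q_0>\bar q$ arises because its parametrization needs $\theta_1(q_0)-\pi>-\pi/2$ at $t=\pi/2$, a constraint your parametrization never encounters.) You should either reconcile your endpoint with the stated $r_0^{\inf}$ or make explicit that your construction produces annular solutions on a strictly larger interval of $r_0$; as written, the proposal claims to prove the theorem as stated while actually delivering a different left endpoint.
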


\begin{remark}
In Theorem \ref{existence_r}, $r_0^{\inf}$ corresponds to the largest value of $r_0$  for which dimple solutions exist, to be discussed in detail in Section \ref{subsecDR};
 for the annular solutions, $r_- \downarrow 0$ as $r_0 \downarrow r_0^{\inf}$. The upper limit,  $r_0^{\sup}$, corresponds to the largest possible value of $r_0$, given the  $\delta-$width annulus
 where $u\equiv -1$ within $\mathcal{B}_{R_0}(\delta)$, with $r_+ \uparrow R_0 - \delta$ as $r_0 \uparrow r_0^{\sup}$.
\end{remark}

\begin{proof}
It is convenient to  prove Theorem \ref{existence_r} by using the rescalings defined in (\ref{rescaling_a}), and proving unique existence for the equivalent problem
prescribed in (\ref{annular_v}), (\ref{lambda_v}) for $q_0 \in (q_0^{\inf}, q_0^{\sup}]$, where $q_0^{\inf}=\bar{q}=\epsilon r_0^{\inf}$, $q_0^{\sup} =\epsilon r_0^{\sup}$.
In Proposition \ref{existu_q} below,  unique existence is demonstrated for   $q_0 \in (\bar{q}, \infty)$, and then the implied ranges indicated in Theorem \ref{existence_r} follows by
returning to the original scaling and
imposing $r^{\sup}_0$  the upper limit, with $r^{\sup}_0$  corresponding to $r^{\sup}_+=R_0 -\delta$. In the course of the proof of  Proposition \ref{existu_q}, we shall see that
$r_0^{\sup} -r_+^{\sup} =O(\epsilon)$, which implies that  $r_0^{\sup}=R_0-\delta +O(\epsilon)$.

\begin{proposition} \label{existu_q}
There exists a unique monotone solution to (\ref{annular_v})-(\ref{lambda_v}) for every $q_0 \in (\bar{q}, \infty)$, where
$\bar{q}:=\theta_1^{-1}(\pi/2)$ corresponds to the first positive root of $J_1(q)$.
\end{proposition}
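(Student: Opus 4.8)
\emph{Plan.} First I would describe, explicitly and completely, the one‑parameter family of monotone solutions of the Bessel boundary value problem \eqref{annular_v} \emph{alone}, using the modulus--phase representation \eqref{AMcq}--\eqref{c1c2A} already set up together with the strict monotonicity of $\theta_1$; the constraint \eqref{lambda_v} then collapses to a single scalar equation in the parameter, and unique solvability of that equation for each admissible $q_0$ is all that remains.

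\emph{The family.} Take $\qm\in(0,\infty)$ as the parameter. Since $\theta_1$ is strictly increasing with $\lim_{q\downarrow0}\theta_1(q)=-\tfrac{\pi}{2}$ (\cite[10.18.3, 10.18.18]{dlmf}), the analysis culminating in \eqref{theta1_phi}--\eqref{theta1_phi2} shows a monotone solution must have $\p=\theta_1(\qm)+\tfrac{\pi}{2}>0$ and $\qp=\theta_1^{-1}\!\big(\theta_1(\qm)+\pi\big)>\qm$, both uniquely fixed by $\qm$. Writing $v=A\,M_0(q)\cos(\theta_0(q)-\p)=A\,[\cos\p\,J_0(q)+\sin\p\,Y_0(q)]$ as in \eqref{AMcq}, one has $v_q=-A\,M_1(q)\cos(\theta_1(q)-\p)$, which by \eqref{arange_a} is strictly negative on $(\qm,\qp)$; so the profile is automatically monotone decreasing for every $A>0$, and it only remains to fix $A$ and $\lambda$ from $v(\qm)=1+\lambda$, $v(\qp)=-1+\lambda$. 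Using the cross-Wronskian $J_0(q)Y_1(q)-J_1(q)Y_0(q)=-\tfrac{2}{\pi q}$ together with $\theta_1(\qp)=\theta_1(\qm)+\pi$ (so $J_1(\qm)/M_1(\qm)=-J_1(\qp)/M_1(\qp)$, and likewise for $Y_1$), these reduce to the closed forms $v(\qm)=2A/(\pi\,m(\qm))$, $v(\qp)=-2A/(\pi\,m(\qp))$, with $m(q):=q\,M_1(q)>0$; hence
\[ \lambda=\lambda(\qm)=\frac{m(\qp)-m(\qm)}{m(\qp)+m(\qm)}\in(-1,1),\qquad A=A(\qm)=\frac{\pi\,m(\qm)\,m(\qp)}{m(\qm)+m(\qp)}>0 . \]
Thus \eqref{annular_v} has, for each $\qm\in(0,\infty)$, exactly one monotone solution, and conversely every monotone solution of \eqref{annular_v} arises this way: writing $(c_1,c_2)=A(\cos\p,\sin\p)$ with $A=\sqrt{c_1^2+c_2^2}$ as in \eqref{c1c2A}, the phase $\p$ and right endpoint $\qp$ are forced by the two zero conditions on $v_q$ and by decrease of $v$.

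\emph{The scalar equation.} The constraint \eqref{lambda_v}, i.e. $\lambda(\qp^2-\qm^2)=\qp^2+\qm^2-2q_0^2$, is equivalent to $2q_0^2=(1-\lambda)\qp^2+(1+\lambda)\qm^2$; inserting $\lambda(\qm)$ this becomes
\[ q_0^2=\Phi(\qm):=\frac{m(\qm)\,\qp^2+m(\qp)\,\qm^2}{m(\qm)+m(\qp)},\qquad \qp=\qp(\qm)=\theta_1^{-1}\!\big(\theta_1(\qm)+\pi\big). \]
So \eqref{annular_v}--\eqref{lambda_v} has a unique monotone solution for given $q_0$ exactly when $\Phi(\qm)=q_0^2$ has a unique root $\qm\in(0,\infty)$. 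Since $\Phi(\qm)$ is a convex combination of $\qm^2$ and $\qp^2$, any solution obeys $\qm<q_0<\qp$; in particular the ordering assumed throughout is automatic. The map $\Phi$ is continuous; by \eqref{eq:eq38}, $\qp-\qm\to\pi$ as $\qm\to\infty$, so $\Phi(\qm)\to\infty$; as $\qm\downarrow0$, $\qp\downarrow\theta_1^{-1}(\tfrac{\pi}{2})$, $m(\qm)\to 2/\pi$, and $m\big(\theta_1^{-1}(\tfrac{\pi}{2})\big)$ is computed from the Wronskian at the first zero of $J_1$, so $\Phi(\qm)$ decreases to a finite limit; I would check that this limit reproduces the endpoint $\bar q$ of the statement and equals the supremal $q_0$ admitting a dimple solution (Section \ref{subsecDR}), consistent with the matching of the two families.

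\emph{Monotonicity of $\Phi$ — the main point.} It remains to prove $\Phi'>0$ on $(0,\infty)$. With $t=\qm$, $s=\qp(\qm)$ and $s'=\theta_1'(t)/\theta_1'(s)>0$, a direct differentiation gives
\[ \Phi'(t)\,\big(m(t)+m(s)\big)^2=(s^2-t^2)\big(m'(t)m(s)-m(t)m'(s)s'\big)+2\big(m(t)+m(s)\big)\big(m(t)\,s\,s'+m(s)\,t\big). \]
The second term is manifestly positive; the first is nonnegative provided $q\mapsto (\log m)'(q)/\theta_1'(q)$ is non-increasing, since $t<s$. Establishing this — which I would reduce to the classical monotonicity properties of the Bessel modulus functions ($m(q)=q\,M_1(q)$ positive and increasing, $q\mapsto q\,M_1^2(q)$ decreasing, $\theta_1'$ positive and decreasing; see \cite[10.18]{dlmf}) — is the real work, and the construction above contributes little to it. Once $\Phi'>0$ is in hand, together with the two endpoint limits $\Phi:(0,\infty)\to(\bar q^2,\infty)$ is a strictly increasing continuous bijection, so for each $q_0\in(\bar q,\infty)$ there is a unique admissible $\qm$, hence a unique monotone solution. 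The chief obstacle is therefore this monotonicity of $\Phi$, i.e. signing the Bessel-function ratios above; a secondary technical point is the $\qm\downarrow0$ limit, where the logarithmic singularities of $Y_0,Y_1$ must be tracked with care.
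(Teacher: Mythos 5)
Your construction is, at bottom, the paper's proof in a different parametrization: where the paper introduces $t=\theta_1(q_0)-\p$ and locates the unique zero of $D(t)=\mathcal{D}(\qp(t))-\mathcal{D}(\qm(t))$, you parametrize by $\qm$ and locate the unique root of $\Phi(\qm)=q_0^2$. These are the same equation: the identity $M_0(q)M_1(q)\sin(\theta_0(q)-\theta_1(q))=2/(\pi q)$ of (\ref{MM}) gives $M_0(\qm)\cos(\theta_0(\qm)-\p)=2/(\pi m(\qm))$ and $M_0(\qp)\cos(\theta_0(\qp)-\p)=-2/(\pi m(\qp))$, whence $D(t)=\tfrac{2}{\pi}\,\tfrac{m(\qm)+m(\qp)}{m(\qm)m(\qp)}\,\bigl(q_0^2-\Phi(\qm(t))\bigr)$. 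Your closed forms for $\lambda$ and $A$, the sign facts, and the automatic ordering $\qm<q_0<\qp$ are all correct, and the Wronskian reduction is a genuine simplification of the paper's bookkeeping. The problem is that the one step you defer --- $\Phi'>0$ --- is the entire content of the paper's Lemma \ref{Dmonotone}, and your proposed reduction to ``classical monotonicity of the Bessel moduli'' does not close it. Your sufficient condition, that $(\log m)'/\theta_1'$ be non-increasing, simplifies (using $\theta_1'=2/(\pi q M_1^2)$ and the identity $qM_1^2=-\tfrac12 q^2[(M_0^2)'+(M_1^2)']$ from \cite[10.18.10]{dlmf}) to the statement that $q\mapsto -q\,(M_0^2)'(q)$ is non-increasing. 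That is a second-order inequality which does not follow from ``$m$ increasing, $qM_1^2$ decreasing, $\theta_1'$ decreasing''; it does follow from Nicholson's integral $M_0^2(x)=\tfrac{8}{\pi^2}\int_0^\infty K_0(2x\sinh\tau)\,d\tau$ together with $(yK_1(y))'=-yK_0(y)<0$, i.e.\ from exactly the device the paper deploys in Claim \ref{MMdecrease} (and the paper's decomposition is slightly more economical, needing only the first-derivative signs $(M_0^2)'<0$, $(M_1^2)'<0$). So the route is viable, but the crux is missing and cannot be waved through.

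A secondary error: your endpoint claim is false. As $\qm\downarrow 0$ one has $m(\qm)\to 2/\pi$ and $\qp\downarrow\bar{q}$; since $J_1(\bar{q})=0$, the Wronskian gives $m(\bar{q})=-2/(\pi J_0(\bar{q}))$, so $\Phi(0^+)=\bar{q}^2\,\tfrac{-J_0(\bar{q})}{1-J_0(\bar{q})}\approx(0.54\,\bar{q})^2$, which is strictly less than $\bar{q}^2$. This limit does coincide with the supremal $q_0^2$ of the dimple family in Section \ref{subsecDR} (your second guess), but it does not ``reproduce the endpoint $\bar{q}$.'' The discrepancy is harmless for the Proposition as stated --- since $(\bar{q},\infty)\subset(\sqrt{\Phi(0^+)},\infty)$, a strictly increasing bijection $\Phi:(0,\infty)\to(\Phi(0^+),\infty)$ still yields unique existence for every $q_0\in(\bar{q},\infty)$ --- but your parametrization then asserts annular solutions on a strictly larger $q_0$-interval than the paper's (whose $t$-parametrization requires $\theta_1(q_0)\ge\pi/2$ to keep $\qm(t)$ defined on all of $[-\pi/2,\pi/2]$), so the claimed matching of endpoints with the statement and with the dimple threshold needs to be corrected rather than ``checked.''
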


\smallskip
\begin{proof}
The proof relies on introducing a functional, $D=D(t)$,  defined below, which allows us to focus first on existence and then on uniqueness.
Let us recall that the general solution to the equation in (\ref{annular_v})
may be written as in (\ref{AMcq}), namely as
\begin{equation}\nonumber
v(q)= A M_0(q) \cos \big(\theta_0(q) - \p \big), \quad
v_q(q)= -A M_1(q) \cos \big(\theta_1(q) - \p \big), \quad q \ge 0,
\end{equation}
where $A>0$, $M_0(q)$, $M_1(q)>0,$ for $q \ge 0.$ The boundary conditions  $v_q(\qm)=v_q(\qp)=0$ in (\ref{annular_v}), together with  the monotonicity of $\theta_1(x)$
with $\lim_{x \downarrow 0} \theta_1(x)= -\pi/2$, imply that
\begin{equation} \label{arange_aa}
   \p >0, \quad            \frac{-\pi}{2}  < \theta_1(q) - \p < \frac{\pi}{2}, \quad  0<\qm<q<\qp,  \quad \theta_1(q_\pm)-\p= \pm \frac{\pi}{2},
\end{equation}
up to possible translations by $2 k \pi$,  $k \in Z,$ see (\ref{arange_a})--(\ref{theta1_phi2}).

\bigskip
The general solution contains two parameters, $A$ and $\p$. As we have already accommodated the boundary conditions $v_q(\qm)=v_q(\qp)=0$,
 finding a solution to (\ref{annular_v}), (\ref{lambda_v}) entails identifying $q_-$ and $q_+$ as functions of $A$ and $\p$ so as to satisfy the two remaining boundary conditions in (\ref{annular_v}).
  In view of (\ref{arange_aa}),  it is convenient to work with the parameters $A$ and $t$, rather than with the parameters $A$ and $\p$, where
 \begin{equation}\label{deft}
 t=\theta_1(q_0) - \p, \quad -\frac{\pi}{2} \le t \le \frac{\pi}{2},
\end{equation}
 up to possible translation by $2k\pi$, $k \in Z$.
From (\ref{arange_aa}), (\ref{deft}), we get  that
\begin{equation} \label{theta1qpm}
\p(t)=\theta_1(q_0)-t,  \quad \theta_1(q_\pm(t))=\theta_1(q_0)-t \pm \frac{\pi}{2}, \quad  t \in [-\pi/2, \pi/2].
\end{equation}
Since  $\theta_1(x)$ is monotonically increasing and continuously differentiable for $x>0$ (\cite[10.18.8]{dlmf}),  $\theta_1(x)$ has an inverse function which is also monotonically increasing and continuously differentiable, and
\begin{equation}\label{46}
q_\pm(t)=\theta_1^{-1}\left(\theta_1(q_0)-t \pm \frac{\pi}{2}\right).\end{equation}
Thus $\p$, $\qm$, $\qp$, and well as $\lambda$ (see (\ref{lambda_v})), can be viewed as  continuously differentiable functions of $t$,
since $\theta_1(q_0) -t -\pi/2 > -\pi/2$ for $t \in [-\pi/2, \pi/2]$, for
$q_0 > \theta_1^{-1}(\pi/2)=\bar{q}.$

From (\ref{AMcq}) and the  boundary conditions, $v(q_-)=1+ \lambda$ and $v(q_+)=-1 +\lambda$,
\begin{equation} \label{A1A2}
     A M_0(\qm) \cos \big(\theta_0(\qm) - \p \big)=1+\lambda, \quad
A M_0(\qp) \cos \big(\theta_0(\qp) - \p \big)=-1+\lambda.
\end{equation}
As $\p$, $q_-$, $q_+$, and $\lambda$ are prescribed in terms of $t$, we get two descriptions for $A$ in terms of $t$ from (\ref{A1A2})  as long as $\cos \big(\theta_0(\qm) - \p \big)$ and
$\cos \big(\theta_0(\qp) - \p \big)$ do not vanish, since  $M_0(x)>0$  for $x>0$. Moreover,  the two descriptions must be equal  for some value of $t \in [-\pi/2, \pi/2]$, if  we are to attain a solution to (\ref{annular_v}), (\ref{lambda_v}).
\smallskip
First we prove
\smallskip
\begin{lemma} \label{signcosine}
 $\cos \big(\theta_0(\qm) - \p \big)>0,$ and $\cos \big(\theta_0(\qp) - \p \big)<0.$
\end{lemma}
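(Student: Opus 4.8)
The plan is to reduce both sign assertions to the single inequality $|\lambda|<1$, and then to obtain that inequality directly from the defining formula (\ref{lambda_v}) for $\lambda$ together with the ordering $0<\qm<q_0<\qp$.

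First I would use the polar representation (\ref{AMcq}) of $v$, recalling that $A>0$ (see (\ref{c1c2A}); note $v\not\equiv0$) and that $M_0(q)>0$ for $q>0$. The remaining boundary conditions then take the form (\ref{A1A2}),
$$A\,M_0(\qm)\cos\!\big(\theta_0(\qm)-\p\big)=1+\lambda,\qquad A\,M_0(\qp)\cos\!\big(\theta_0(\qp)-\p\big)=\lambda-1 .$$
Since the prefactors $A\,M_0(\qm)$ and $A\,M_0(\qp)$ are strictly positive, $\cos(\theta_0(\qm)-\p)$ has the sign of $1+\lambda$ while $\cos(\theta_0(\qp)-\p)$ has the sign of $\lambda-1=-(1-\lambda)$. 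Hence Lemma \ref{signcosine} is equivalent to the two inequalities $1+\lambda>0$ and $1-\lambda>0$.

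To establish these I would rearrange (\ref{lambda_v}) into
$$1+\lambda=\frac{2\big(\qp^{2}-q_0^{2}\big)}{\qp^{2}-\qm^{2}},\qquad 1-\lambda=\frac{2\big(q_0^{2}-\qm^{2}\big)}{\qp^{2}-\qm^{2}} ,$$
and then invoke $0<\qm<q_0<\qp$. This ordering is part of the annular ansatz (cf. (\ref{annular_r})); within the $t$-parametrization used in the proof of Proposition \ref{existu_q} it also follows from (\ref{46}), since $\theta_1^{-1}$ is strictly increasing and $t\in(-\tfrac{\pi}{2},\tfrac{\pi}{2})$ forces $\theta_1(q_0)-t-\tfrac{\pi}{2}<\theta_1(q_0)<\theta_1(q_0)-t+\tfrac{\pi}{2}$. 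Consequently the common denominator $\qp^{2}-\qm^{2}$ and both numerators are strictly positive, so $1+\lambda>0$ and $1-\lambda>0$, i.e.\ $|\lambda|<1$, and the two cosines have the asserted signs.

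I do not expect a serious obstacle here; the two points needing a little care are (i) that the sign transfer from $v(\qm),v(\qp)$ to the cosines relies on the strict positivity of $A$ and of $M_0$ at $\qm$ and $\qp$, and (ii) that the ordering $\qm<q_0<\qp$ must be strict — at the endpoints $t=\pm\tfrac{\pi}{2}$ one of the numerators above degenerates to zero and the corresponding cosine vanishes, so Lemma \ref{signcosine} (and the two-descriptions-of-$A$ argument that invokes it) is to be understood for $t$ in the open interval $(-\tfrac{\pi}{2},\tfrac{\pi}{2})$.
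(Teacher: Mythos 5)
Your argument is circular in the context where this lemma is used, so there is a genuine gap. You deduce the signs of $\cos(\theta_0(\qm)-\p)$ and $\cos(\theta_0(\qp)-\p)$ from the two amplitude boundary conditions (\ref{A1A2}) together with $A>0$; but inside the proof of Proposition \ref{existu_q} those two conditions are precisely what has not yet been arranged. There, $\qm(t)$, $\qp(t)$, $\p(t)$, $\lambda(t)$ are defined for every $t\in[-\pi/2,\pi/2]$ by (\ref{theta1qpm})--(\ref{46}) and (\ref{lambda_v}), and Lemma \ref{signcosine} is needed \emph{a priori}, for all such $t$: it guarantees that the two candidate expressions for $A$ in (\ref{defA}) are well defined and positive, that $D(t)$ in (\ref{defD}) is a legitimate comparison function, that $D(-\pi/2)<0<D(\pi/2)$ in Lemma \ref{Dpm}, and that the sign bookkeeping in Lemma \ref{Dmonotone} yields $D'(t)>0$. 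Your derivation only establishes the sign claims for an already-existing solution of (\ref{annular_v}), (\ref{lambda_v}) (where indeed $|\lambda|<1$, consistent with (\ref{q_large_lambdaf})), so it cannot be fed into the existence and uniqueness machinery without assuming what is to be proved. The missing ingredient is an unconditional phase comparison: from $J_0'=-J_1$, $Y_0'=-Y_1$ and the identity $M_0(x)M_1(x)\sin\big(\theta_0(x)-\theta_1(x)\big)=\tfrac{2}{\pi x}$ (DLMF 10.18.12, via (\ref{MN})) one gets $\theta_1(x)<\theta_0(x)<\theta_1(x)+\pi$ for all $x>0$, and combining this with $\theta_1(q_\pm)-\p=\pm\pi/2$ from (\ref{arange_aa}) gives $-\pi/2<\theta_0(\qm)-\p<\pi/2$ and $\pi/2<\theta_0(\qp)-\p<3\pi/2$, i.e.\ the lemma, for every $t$, with no reference to the amplitude conditions.

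Your closing remark also reveals the problem: you conclude that at $t=\pm\pi/2$ ``the corresponding cosine vanishes.'' It does not --- the strict inequalities $\theta_1<\theta_0<\theta_1+\pi$ hold for all $x>0$, so the cosines are bounded away from zero in sign for every $t\in[-\pi/2,\pi/2]$; what vanishes at the endpoints is the factor $q_\pm^2-q_0^2$ in $\mathcal{D}$ (equivalently $1\pm\lambda$). Indeed Lemma \ref{Dpm} depends on the cosines being strictly signed exactly at $t=\pm\pi/2$, which is where your ordering argument $\qm<q_0<\qp$ degenerates and your equivalence with $|\lambda|<1$ breaks down (at those endpoints the pair (\ref{A1A2}) with $A>0$ has no solution at all, which is the correct reading of the degeneration). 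As an \emph{a posteriori} observation about genuine annular solutions your computation $1+\lambda=2(\qp^2-q_0^2)/(\qp^2-\qm^2)$, $1-\lambda=2(q_0^2-\qm^2)/(\qp^2-\qm^2)$ is fine, but it does not prove the lemma in the generality the paper requires.
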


\begin{proof}
First we note that for $x \ge 0,$
\begin{equation}\label{MN}
\begin{split}
& J_0(x)=M_0(x)\cos\theta_0(x), \quad  J_0'(x)=N_0(x)\cos\phi_0(x) =  -M_1(x)\cos\theta_1(x),\\
& Y_0(x)=M_0(x)\sin\theta_0(x), \quad  Y_0'(x)=N_0(x)\sin\phi_0(x)  = -M_1(x)\sin\theta_1(x).
\end{split}\end{equation}
Using (\ref{MN}), the formula \cite[10.18.12]{dlmf},
$M_0(x) N_0(x) \sin(\phi_0(x)-\theta_0(x))=\frac{2}{\pi x}, \quad x>0, $
and a little trigonometry,
\begin{equation} \label{MM}
M_0(x) M_1(x) \sin(\theta_0(x)-\theta_1(x))=\frac{2}{\pi x}, \quad x>0.
\end{equation}
Since $M_0(x),\, M_1(x) >0$ for $x>0$, (\ref{MM}) implies that $\sin(\theta_0(x)-\theta_1(x))>0, \quad x>0$. Thus
\begin{equation}\nonumber
  0<\theta_0(x)-\theta_1(x)<\pi, \quad x>0,
\end{equation}
up to  possible translation by $2k\pi$, $k \in Z$. Therefore,
\begin{equation} \label{t1t0}
\theta_1(x)<\theta_0(x)<\theta_1(x)+\pi, \quad  x>0.
\end{equation}
Setting $x=\qm$ in (\ref{t1t0}), we obtain that $\theta_1(\qm)-\p< \theta_0(\qm)-\p<\theta_1(\qm)-\p+\pi.$ Then using (\ref{arange_aa}),
\begin{equation}\nonumber
 -\frac{\pi}{2}< \theta_0(\qm)-\p<\frac{\pi}{2}
\end{equation}
and therefore $\cos \big(\theta_0(\qm) - \p \big)>0.$
Similarly, we obtain from (\ref{t1t0}) and (\ref{arange_aa}) that
\begin{equation}\nonumber
 \frac{\pi}{2}< \theta_0(\qp)-\p<\frac{3\pi}{2}
\end{equation}
and therefore $\cos \big(\theta_0(\qp) - \p \big)<0.$
\end{proof}

Given Lemma \ref{signcosine}, we now obtain two expressions for $A$, namely,
\begin{equation} \label{defA}
A=\frac{1+\lambda}{M_0(\qm) \cos \big(\theta_0(\qm) - \p \big)}=
  \frac{-1+\lambda}{M_0(\qp) \cos \big(\theta_0(\qp) - \p \big)}.
\end{equation}
Using the definition of $\lambda$ given in (\ref{lambda_a}), and  noting that  $\theta_1(\qp)-\theta_1(\qm)=\pi$     implies that $q_+ - q_->0,$
we obtain from (\ref{defA}) that
%
\begin{equation} \nonumber
 \frac{\qp^2-q_0^2}{M_0(\qm) \cos \big(\theta_0(\qm) - \p \big)} = \frac{\qm^2-q_0^2}{M_0(\qp) \cos \big(\theta_0(\qp) - \p \big)},
 \end{equation}
where all the terms  are continuously differentiable functions of $t$. It remains to verify that the above equation uniquely defines $t$.
Cross-multiplying, we obtain now that roots of $D(t)=0,$ with $t \in [-\pi/2, \pi/2],$ correspond to solutions of (\ref{annular_v}), (\ref{lambda_v}), where
\begin{equation}\label{defD}
D(t) := \mathcal{D}(\qp(t)) - \mathcal{D}(\qm(t)),\quad \mathcal{D}(q):=(q^2-q_0^2)M_0(q) \cos \big(\theta_0(q) - \p \big),
\end{equation}
is continuously differentiable for $t \in [-\pi/2, \pi/2]$.
Existence now follows from  the following lemma.
\smallskip
\begin{lemma} \label{Dpm}
$D\left(-{\pi}/{2}\right)<0, \quad
   D\left({\pi}/{2}\right)>0.$
\end{lemma}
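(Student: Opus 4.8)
The plan is to evaluate $D(t)$ at the two endpoints $t = \pm\pi/2$ and use the geometric behavior of $q_\pm(t)$ there. Recall from \eqref{46} that $q_\pm(t) = \theta_1^{-1}\big(\theta_1(q_0) - t \pm \tfrac{\pi}{2}\big)$, so that at $t = \pi/2$ we have $\theta_1(q_-(\pi/2)) = \theta_1(q_0) - \pi/2 + \pi/2 = \theta_1(q_0)$, i.e. $q_-(\pi/2) = q_0$; and at $t = -\pi/2$ we have $\theta_1(q_+(-\pi/2)) = \theta_1(q_0) + \pi/2 + \pi/2 = \theta_1(q_0) + \pi$, while $\theta_1(q_-(-\pi/2)) = \theta_1(q_0)$, so $q_-(-\pi/2) = q_0$ again. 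Thus at \emph{both} endpoints one of the factors $(q^2 - q_0^2)$ appearing in $\mathcal{D}(q_\pm)$ vanishes: at $t = \pi/2$, $\mathcal{D}(q_-(\pi/2)) = 0$, so $D(\pi/2) = \mathcal{D}(q_+(\pi/2))$; and at $t = -\pi/2$, $\mathcal{D}(q_-(-\pi/2)) = 0$, so $D(-\pi/2) = \mathcal{D}(q_+(-\pi/2))$.

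It then remains to determine the sign of $\mathcal{D}(q_+)$ in each case. Since $q_+ > q_-$ and (by the endpoint computation) $q_-(\pm\pi/2) = q_0$, we have $q_+^2 - q_0^2 > 0$ at both endpoints, so the sign of $\mathcal{D}(q_+)$ is the sign of $M_0(q_+)\cos(\theta_0(q_+) - \p)$, and since $M_0 > 0$ this is the sign of $\cos(\theta_0(q_+) - \p)$. By Lemma~\ref{signcosine}, $\cos(\theta_0(q_+) - \p) < 0$. Hence $\mathcal{D}(q_+) < 0$ at both endpoints. This gives $D(\pi/2) < 0$ and $D(-\pi/2) < 0$ — which is the \emph{opposite} of the claimed signs, so I must have mislabeled which endpoint makes $q_-$ equal $q_0$ versus which makes $q_+$ equal $q_0$.

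Let me recompute carefully using \eqref{theta1qpm}: $\theta_1(q_+(t)) = \theta_1(q_0) - t + \pi/2$ and $\theta_1(q_-(t)) = \theta_1(q_0) - t - \pi/2$. At $t = -\pi/2$: $\theta_1(q_-(-\pi/2)) = \theta_1(q_0) + \pi/2 - \pi/2 = \theta_1(q_0)$, so $q_-(-\pi/2) = q_0$, giving $D(-\pi/2) = \mathcal{D}(q_+(-\pi/2))$ with $q_+(-\pi/2) > q_0$, hence $q_+^2 - q_0^2 > 0$ and by Lemma~\ref{signcosine} $\cos(\theta_0(q_+)-\p) < 0$, so $D(-\pi/2) < 0$. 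At $t = \pi/2$: $\theta_1(q_+(\pi/2)) = \theta_1(q_0) - \pi/2 + \pi/2 = \theta_1(q_0)$, so $q_+(\pi/2) = q_0$, giving $D(\pi/2) = -\mathcal{D}(q_-(\pi/2))$ with $q_-(\pi/2) < q_0$, hence $q_-^2 - q_0^2 < 0$ and by Lemma~\ref{signcosine} $\cos(\theta_0(q_-)-\p) > 0$, so $\mathcal{D}(q_-(\pi/2)) < 0$ and therefore $D(\pi/2) = -\mathcal{D}(q_-(\pi/2)) > 0$. So the clean statement is: $D(\pi/2) > 0$ follows, and $D(-\pi/2) < 0$ follows, matching the lemma.

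\begin{proof}
By \eqref{theta1qpm}, $\theta_1(q_-(t)) = \theta_1(q_0) - t - \pi/2$ and $\theta_1(q_+(t)) = \theta_1(q_0) - t + \pi/2$. Hence at $t = -\pi/2$ we have $\theta_1(q_-(-\pi/2)) = \theta_1(q_0)$, so by the monotonicity of $\theta_1$, $q_-(-\pi/2) = q_0$, which makes the factor $q_-^2 - q_0^2$ in $\mathcal{D}(q_-)$ vanish; thus $D(-\pi/2) = \mathcal{D}(q_+(-\pi/2))$. Since $\theta_1(q_+(-\pi/2)) = \theta_1(q_0) + \pi > \theta_1(q_0)$, monotonicity gives $q_+(-\pi/2) > q_0$, so $q_+^2 - q_0^2 > 0$; as $M_0(q) > 0$ and, by Lemma~\ref{signcosine}, $\cos(\theta_0(q_+) - \p) < 0$, we conclude $D(-\pi/2) = (q_+^2 - q_0^2) M_0(q_+) \cos(\theta_0(q_+) - \p) < 0$.

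Similarly, at $t = \pi/2$ we have $\theta_1(q_+(\pi/2)) = \theta_1(q_0)$, so $q_+(\pi/2) = q_0$ and the factor $q_+^2 - q_0^2$ in $\mathcal{D}(q_+)$ vanishes; thus $D(\pi/2) = -\mathcal{D}(q_-(\pi/2))$. Since $\theta_1(q_-(\pi/2)) = \theta_1(q_0) - \pi < \theta_1(q_0)$, monotonicity gives $q_-(\pi/2) < q_0$, so $q_-^2 - q_0^2 < 0$; as $M_0(q) > 0$ and, by Lemma~\ref{signcosine}, $\cos(\theta_0(q_-) - \p) > 0$, we get $\mathcal{D}(q_-(\pi/2)) = (q_-^2 - q_0^2) M_0(q_-) \cos(\theta_0(q_-) - \p) < 0$, hence $D(\pi/2) = -\mathcal{D}(q_-(\pi/2)) > 0$.
\end{proof}

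\textbf{Main obstacle.} The only delicate point is bookkeeping: correctly tracking which of $q_\pm$ coincides with $q_0$ at each endpoint (they swap roles), and checking that the surviving factor $q^2 - q_0^2$ has the sign that combines with the sign of $\cos(\theta_0 - \p)$ from Lemma~\ref{signcosine} to yield the asserted sign of $D$. One must also make sure $q_\pm(t) > 0$ throughout $t \in [-\pi/2, \pi/2]$ so that $M_0(q_\pm) > 0$ and $\mathcal{D}$ is well-defined there — this is exactly the condition $q_0 > \bar{q} = \theta_1^{-1}(\pi/2)$, which guarantees $\theta_1(q_0) - t - \pi/2 > -\pi/2$, i.e. $\theta_1^{-1}$ is applied within its domain of definition and $q_-(t) > 0$. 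No genuine analytic difficulty arises; the argument is a sign check resting on the already-established monotonicity of $\theta_1$ and on Lemma~\ref{signcosine}.
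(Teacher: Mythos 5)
Your final proof is correct and follows essentially the same route as the paper: evaluate $q_\pm(\mp\pi/2)$ via the relation $\theta_1(q_\pm(t))=\theta_1(q_0)-t\pm\pi/2$, observe that one of $q_\pm$ equals $q_0$ at each endpoint so the corresponding term of $D$ vanishes, and read off the sign of the surviving term $(q^2-q_0^2)M_0(q)\cos(\theta_0(q)-\p)$ from the monotonicity of $\theta_1$ and Lemma~\ref{signcosine}. The initial mislabeling in your plan was self-corrected, and the proof as written matches the paper's argument.
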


\begin{proof}   Let us first consider $D(t)$ with $t=-{\pi}/{2}.$ From (\ref{theta1qpm}),  $\theta_1(q_-(-\pi/2))=\theta_1(q_0)>\pi/2$
for $q_0> \bar{q}$.
Recalling the monotonicity of  $\theta_1(x)$ for $x>0$, and using (\ref{46}),
\begin{equation} \label{tm}
\qm(-\pi/2) =q_0, \quad  \qp(-\pi/2) =
    \theta_1^{-1}\left(\theta_1(q_0)+\pi\right)> q_0.
\end{equation}
Similarly for $D(t)$ with $t={\pi}/{2},$ we obtain that $\theta_1(q_+(\pi/2))=\theta_1(q_0)> \pi/2,$ and hence
\begin{equation}\label{tp}
\qm(\pi/2) = \theta_1^{-1}\left(\theta_1(q_0) - \pi\right)< q_0, \quad \qp(\pi/2) =q_0.
\end{equation}

 \smallskip
Let us recall Lemma \ref{signcosine} and the positivity of $M_0(x)$ for $x>0$. Then using (\ref{tm}) and the definition  of $D(t)$ in (\ref{defD}),
\begin{equation}\nonumber
D\left(-{\pi}/{2}\right)=(\qp(-\pi/2)^2-q_0^2)M_0(\qp(-\pi/2)) \cos \big(\theta_0(\qp(-\pi/2)) - \p(-\pi/2) \big) < 0.
\end{equation}
Similarly, using (\ref{tp}) in (\ref{defD}), we obtain that $D\left({\pi}/{2}\right)>0.$
\end{proof}


Uniqueness now follows by proving the following
\begin{lemma} \label{Dmonotone} $D'(t) >0,$  $t\in (-\pi/2, \pi/2).$
\end{lemma}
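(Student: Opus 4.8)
The plan is to first turn $D(t)$ into an explicit expression involving only the Bessel modulus function $M_1$, and then to differentiate that expression. At $q=q_\pm(t)$ the relation $\theta_1(q_\pm(t))-\p=\pm\tfrac{\pi}{2}$ from (\ref{theta1qpm}) gives $\theta_0(q_\pm)-\p=\bigl(\theta_0(q_\pm)-\theta_1(q_\pm)\bigr)\pm\tfrac{\pi}{2}$, and since $0<\theta_0(q)-\theta_1(q)<\pi$ by (\ref{t1t0}), identity (\ref{MM}) yields
\begin{equation}\nonumber
M_0(q_\pm)\cos\bigl(\theta_0(q_\pm)-\p\bigr)=\mp\,M_0(q_\pm)\sin\!\bigl(\theta_0(q_\pm)-\theta_1(q_\pm)\bigr)=\mp\,\frac{2}{\pi\,q_\pm M_1(q_\pm)}.
\end{equation}
Feeding this into the definition (\ref{defD}) of $D(t)$ would give
\begin{equation}\nonumber
D(t)=-\frac{2}{\pi}\left(\frac{q_+^2-q_0^2}{q_+M_1(q_+)}+\frac{q_-^2-q_0^2}{q_-M_1(q_-)}\right),
\end{equation}
an identity that, incidentally, re-proves Lemmas \ref{signcosine} and \ref{Dpm}, using $q_-(-\pi/2)=q_0=q_+(\pi/2)$ from (\ref{tm})--(\ref{tp}) together with $q_-(t)<q_0<q_+(t)$ for $t\in(-\pi/2,\pi/2)$.

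Next I would differentiate. From (\ref{46}) and the phase-derivative formula $\theta_1'(q)=2/(\pi q M_1^2(q))$ (\cite[10.18.8]{dlmf}) one has $q_\pm'(t)=-1/\theta_1'(q_\pm)=-\tfrac{\pi}{2}q_\pm M_1^2(q_\pm)$. Writing $\Phi(q):=(q^2-q_0^2)/(qM_1(q))$, a routine computation gives $\Phi'(q)\,qM_1^2(q)=\tfrac{q^2+q_0^2}{q}M_1(q)-(q^2-q_0^2)M_1'(q)$, so that
\begin{equation}\nonumber
D'(t)=\sum_{\pm}\left[\frac{q_\pm^2+q_0^2}{q_\pm}M_1(q_\pm)-(q_\pm^2-q_0^2)M_1'(q_\pm)\right].
\end{equation}
Granting $M_1>0$, $M_1'\le 0$, and $(qM_1)'\ge 0$ on $(0,\infty)$, the $q_+$-summand is $\ge 0$ (both of its terms are, since $q_+>q_0$), and its first term is strictly positive; for the $q_-$-summand, since $q_-<q_0$, I would estimate $-(q_-^2-q_0^2)M_1'(q_-)=(q_0^2-q_-^2)M_1'(q_-)\ge-(q_0^2+q_-^2)|M_1'(q_-)|$ and then factor the result as $\tfrac{q_0^2+q_-^2}{q_-}\bigl(M_1(q_-)+q_-M_1'(q_-)\bigr)=\tfrac{q_0^2+q_-^2}{q_-}(qM_1)'(q_-)\ge 0$. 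Hence $D'(t)\ge\tfrac{q_+^2+q_0^2}{q_+}M_1(q_+)>0$, and together with Lemma \ref{Dpm} this forces a unique root of $D$ in $[-\pi/2,\pi/2]$, which is what Proposition \ref{existu_q} needs.

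The hard part will be justifying the three monotonicity properties of the Bessel modulus functions. Positivity of $M_1$ is clear. For $M_1'\le 0$ and $(qM_1)'\ge 0$ I would use Nicholson's integral formula $M_\nu^2(q)=J_\nu^2(q)+Y_\nu^2(q)=\tfrac{8}{\pi^2}\int_0^\infty K_0(2q\sinh t)\cosh(2\nu t)\,dt$ (\cite[10.9.30]{dlmf}): since $K_0$ is positive and strictly decreasing, both $M_0^2$ and $M_1^2$ are strictly decreasing, giving $M_1'\le 0$; then, using $J_0'=-J_1$, $Y_0'=-Y_1$, $J_1'=J_0-\tfrac1qJ_1$, $Y_1'=Y_0-\tfrac1qY_1$, one verifies $\tfrac{d}{dq}\bigl(q^2M_1^2(q)\bigr)=2q^2\bigl(J_0J_1+Y_0Y_1\bigr)=-q^2\tfrac{d}{dq}\bigl(M_0^2(q)\bigr)>0$, so $q^2M_1^2$, and hence $qM_1$, is increasing. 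The remaining points — monotonicity and $C^1$-invertibility of $\theta_1$, and the strict ordering $q_-(t)<q_0<q_+(t)$ near the endpoints — are routine from (\ref{46}) and $q_0>\bar{q}=\theta_1^{-1}(\pi/2)$.
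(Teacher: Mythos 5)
Your argument is correct, and it reaches exactly the same final quantity as the paper's proof (one can check that your summand $\tfrac{q_\pm^2+q_0^2}{q_\pm}M_1(q_\pm)-(q_\pm^2-q_0^2)M_1'(q_\pm)$ equals $2P(q_\pm)/M_1(q_\pm)$, where $P$ is the auxiliary expression the paper bounds), but the route is genuinely different in a way worth noting. The paper differentiates $\mathcal{D}(q_\pm(t))=(q_\pm^2-q_0^2)M_0(q_\pm)\cos(\theta_0(q_\pm)-\p)$ directly, which requires the phase- and modulus-derivative identities \cite[10.18.8, 10.18.11]{dlmf}, the tangent/cotangent manipulations in (\ref{62})--(\ref{65}), and Lemma \ref{signcosine} to fix the signs of the cosines at the end; it then rewrites $P(x)$ via \cite[10.18.10]{dlmf} as $-\tfrac14(x^2+q_0^2)(M_0^2)'-\tfrac12 x^2(M_1^2)'$, which is \emph{manifestly} positive once Claim \ref{MMdecrease} is in hand. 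You instead use the Wronskian-type identity (\ref{MM}) together with $\theta_1(q_\pm)-\p=\pm\pi/2$ to eliminate the phases at the outset, obtaining the closed form $D(t)=-\tfrac{2}{\pi}\bigl(\Phi(q_+)+\Phi(q_-)\bigr)$ with $\Phi(q)=(q^2-q_0^2)/(qM_1(q))$; this buys a shorter differentiation, an immediate re-derivation of Lemmas \ref{signcosine} and \ref{Dpm}, and makes the whole uniqueness argument transparent. The price is that your final positivity check is less symmetric: it needs the case split $q_+>q_0>q_-$ and the extra fact $(qM_1)'\ge 0$, which you correctly reduce to $(M_0^2)'<0$ via $\tfrac{d}{dq}(q^2M_1^2)=-q^2(M_0^2)'$ --- the same identity \cite[10.18.10]{dlmf} the paper uses, so the underlying ingredients (Nicholson's integral for the monotonicity of $M_0^2$ and $M_1^2$, plus that recurrence) coincide. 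Both proofs are sound; yours is arguably the cleaner packaging.
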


\begin{proof}
Our proof  is based on calculating $D'(t)$ where  $D(t)$, given in (\ref{defD}).
We begin by calculating $\p'(t)$, $q_\pm'(t)$, $\theta_0'(q_\pm(t))q_\pm'(t)$, and     $M_0'(q_\pm(t))q_\pm'(t).$

\smallskip
We recall the formulas from \cite[10.18.8]{dlmf},
\begin{equation} \label{Mt01d}
M_0^2(x)\theta_0'(x)=\frac{2}{\pi x}, \quad  M_1^2(x)\theta_1'(x)=\frac{2}{\pi x}, \quad x > 0.
\end{equation}
 Since $M_0(x)$ and $M_1(x)>0$ are positive, (\ref{Mt01d}) implies that
\begin{equation} \label{dtheta01}
\theta_0'(x)=\frac{2}{\pi x}\frac{1}{\big(M_0(x)\big)^2}, \quad
\theta_1'(x)=\frac{2}{\pi x}\frac{1}{\big(M_1(x)\big)^2}\quad x>0.
\end{equation}
It follows from  (\ref{deft}) that
\begin{equation} \label{dpsi}
\p'(t) = -1.
\end{equation}
From  (\ref{theta1qpm}) and the invertibility of $\theta_1(x)$ for $x>0$,
\begin{equation} \label{qpmt}
q_\pm=q_\pm(t)=\theta_1^{-1}\left(\theta_1(q_0)-t \pm \frac{\pi}{2}\right),
\end{equation}
and therefore  $\theta_1'(q_\pm(t))q_\pm'(t)=-1$.
Recalling  (\ref{dtheta01}), we obtain that
\begin{equation}\label{difq}
q_\pm'(t)=-\frac{\pi q_\pm}{2} \left( M^2_1(q_\pm)\right), \quad \theta_0'(q_\pm(t))q_\pm'(t)=-\left(\frac{M^2_1(q_\pm)}{M^2_0(q_\pm)}\right).
\end{equation}

Let us consider now the formula \cite[10.18.11]{dlmf},
${M_0\theta_0'}/{M_0'}=\tan(\phi_0-\theta_0)$. Recalling (\ref{MN}),
we see that $\tan\phi_0=\tan\theta_1$, and
\begin{equation}\label{62}
\frac{\sin\phi_0}{\cos\phi_0}=\frac{N_0\sin\phi_0}{N_0\cos\phi_0}=
\frac{-M_1\sin\theta_1}{-M_1\cos\theta_1},
\end{equation}
and therefore using (\ref{difq}),
\begin{equation}\nonumber \begin{split}
    &M_0'(q_\pm)=M_0(q_\pm)\theta_0'(q_\pm)\cot(\theta_1(q_\pm)-\theta_0(q_\pm)), \\[1ex]
&M_0'(q_\pm)q_\pm' = \frac{M_1^2(q_\pm)}{M_0(q_\pm)}\cot(\theta_0(q_\pm)-\theta_1(q_\pm)).
\end{split}\end{equation}
Since $\theta_0(\qp)-\theta_1(\qp)=
  \big(\theta_0(\qp)-\p\big)-\big(\theta_1(\qp)-\p\big) =
  \big(\theta_0(\qp)-\p\big)-\frac{\pi}{2}$, we get that
\begin{equation}\label{64}
\cot\big(\theta_0(\qp)-\theta_1(\qp)\big)=-\tan\big(\theta_0(\qp)-\p\big),
\end{equation}
and similarly
\begin{equation}\label{65}
\cot\big(\theta_0(\qm)-\theta_1(\qm)\big)=-\tan\big(\theta_0(\qm)-\p\big).
\end{equation}
Thus
\begin{equation}
M_0'(q_\pm(t))q_\pm'(t)=-\frac{M^2_1(q_\pm)}{M_0(q_\pm)}\tan\big(\theta_0(q_\pm)-\p\big).
\end{equation}

Recalling (\ref{defD}), let us differentiate $\mathcal{D}(q_\pm(t)).$ Using the results above,
\begin{equation}\begin{split}
&\frac{d}{dt}\mathcal{D}(q_\pm(t)=(q_\pm^2-q_0^2)M_0(q_\pm)\cos(\theta_0(q_\pm)-\p) \\
&\quad=-\pi q_\pm^2 M_0(q_\pm) M^2_1(q_\pm) \cos(\theta_0(q_\pm)-\p)
             -(q^2-q_0^2)M_0(q)\sin(\theta_0(q)-\p).
\end{split}\end{equation}

We note that
$ \sin(\theta_0(q)-\p) = \tan(\theta_0(q)-\p)\cos(\theta_0(q)-\p),$
and using (\ref{62})-(\ref{65})) we get that $\tan(\theta_0(q)-\p)=+\cot(\phi_0(q)-\theta_0(q))$.
By formula \cite[10.18.11]{dlmf},
$\tan(\phi_0-\theta_0)={2}/({\pi x M_0 M_0'})$ for $x>0$, and hence
\begin{equation}
\tan(\theta_0(q)-\p)=\cot(\phi_0(q)-\theta_0(q))=\half\pi q M_0M_0'=
\tfrac{1}{4}\pi q \left(M_0^2\right)'.
\end{equation}
Therefore
\begin{equation}
\frac{d}{dt}\mathcal{D}(q_\pm(t))
=-\left(q_\pm  M^2_1(q_\pm)
           +\tfrac{1}{4}(q_\pm^2-q_0^2)\left(M_0^2(q_\pm)\right)'\right)
                   \pi q_\pm   M_0(q_\pm)  \cos(\theta_0(q_\pm)-\p),
\end{equation}
where  $q_\pm=q_\pm(t)$, $\p=\p(t)$.
%
Let us, then, consider the expression
\begin{equation}
P(x):= x \big(M_1(x)\big)^2
     +\tfrac{1}{4}(x^2-q_0^2)\left(M_0(x)^2\right)', \quad x>0.
\end{equation}
Since $N_0^2(x)=M_1^2(x)$ for $x \ge 0$ by  (\ref{MN}),  the
formula \cite[10.18.10]{dlmf},
$x^2 M_0(x)M_0'(x)+x^2 N_0(x)N_0'(x)+x\big(N_0(x)\big)^2=0,$ $x \ge 0$, implies that
%
%
\begin{equation} \nonumber
x\big(M_1(x)\big)^2 = -\half x^2(M_0^2(x))'-\half x^2 (M_1^2(x))', \quad x \ge 0.
\end{equation}
Therefore,
\begin{equation}\nonumber
P(x) = -\tfrac{1}{4} x^2\left(M^2_0(x)\right)'
     -\tfrac{1}{4}q_0^2\left(M^2_0(x)\right)'
      -\tfrac{1}{2} x^2 \left(M^2_1(x)\right)', \quad x > 0.
\end{equation}
The claim below implies that $P(x)>0$ for   $x>0$.
\smallskip

\begin{claim} \label{MMdecrease}
$\left(M^2_0(x)\right)'<0$ and $\left(M^2_1(x)\right)'<0$, for $x >0.$
\end{claim}

\begin{proof}
To prove the claim, we use the Nicholson's Integral  Representation (see \cite[10.9.30]{dlmf}),
which implies that for $x>0,$
\begin{equation}\label{NI}
M_0^2(x)=\frac{8}{\pi^2}
        \int_0^{\infty} K_0(2x\sinh t)\,dt, \quad M_1^2(x)=\frac{8}{\pi^2}
        \int_0^{\infty} \cosh(2t)\, K_0(2x\sinh t)\,dt,
\end{equation}
and using  the formula (see \cite[10.29.3]{dlmf} that
\begin{equation} \label{K0K1}
K_0'(x)=-K_1(x), \quad x>0,
\end{equation}
where in (\ref{NI}), (\ref{K0K1}),  $K_0(x)$ and $K_1(x)$ denote the second standard solutions to the Modified Bessel equation,
with $\nu =0,$ $1$, respectively.

By considering the asymptotic behavior of $K_0(x)$ and $K_1(x)$ for $0<x \ll 1$ \cite[10.30.2,10.30.3]{dlmf} and  $x \gg 1$ \cite[10.25.3]{dlmf}, it readily  follows that the formal differentiation of
 the (convergent) representations for $M_0(x)$ and $M_1(x)$ given in (\ref{NI}) is justified,
\begin{equation} \nonumber
\left(M^2_0(x)\right)'=   -\frac{8}{\pi^2}
        \int_0^{\infty} K_1(2x\sinh t)\cdot 2\sinh t\,dt, \quad x>0,
\end{equation}
\begin{equation} \nonumber
\left(M^2_1(x)\right)'=   -\frac{8}{\pi^2}
        \int_0^{\infty}\cosh(2t)\,  K_1(2x\sinh t)\cdot 2\sinh t\,dt, \quad x>0,
\end{equation}
as the integrals above are convergent uniformly in $x$, for $x>0.$
As the functions $\sinh(x)$, $\cosh(x)$ and $K_1(x)$ are  strictly positive for  $x>0$, the claim follows.
\end{proof}
To complete the proof of Lemma \ref{Dmonotone}, let us  recall that $M_0(x)>0$ for $x > 0$,  and  $\cos \big(\theta_0(\qp) - \p \big)<0$ and $\cos \big(\theta_0(\qm) - \p \big)>0$ by  Lemma \ref{signcosine}.
Hence the above claim implies that $\pm \frac{d}{dt}\mathcal{D}(q_\pm(t))>0$, and therefor $D'(t)>0$ in accordance with the definition of $D(t)$ in  (\ref{defD}). \end{proof}
This  completes the proof of Proposition \ref{existu_q}. \end{proof}
To return now and complete the proof of Theorem \ref{existence_r}, let  $q_+^{\sup}=(R_0 -\delta)/\epsilon)$ and let us consider the corresponding value of $q_0^{\sup}$.
Since  $q_+^{\sup} = \theta_1^{-1}(\theta_1(q_0^{\sup} - t + \pi/2))$ for some $t \in [-\pi/1, \pi/2]$ by (\ref{46}), we obtain that $q_+^{\sup}=q_0^{\sup} + O(1)$
and hence $r_0^{\sup}=r_+^{\sup} + O(\epsilon)$
as claimed earlier. \end{proof}
%

\subsection{Dimple solutions for (DQOP)} \label{subsecDR}
We now consider radial ``dimple'' solution in $\Omega=\mathcal{B}_{R_0}(\delta)$, with    $-1<u(r) < 1$ for $r \in (0, r_+)$ and $u(r) \equiv -1$ for $r\in [r_+, R_0),$
where $r_+$ reflects a free boundary with $0 <r_+ < R_0 - \delta.$
 As we saw  in Section \ref{subsect2.2}, $u(r)$ should satisfy
\begin{equation} \nonumber
\begin{split}
& u+\lambda=-\e^2(u_{rr}+\frac{1}{r}u_r), \quad  0<r<r_+,\\
& u(0) \in (-1,1] \ , \quad u(r_+)=-1,\\
& u_r(0)=0\ , \quad u_r(r_+)=0,
\end{split}\end{equation}
and  the equivalent mean mass condition implies that
\begin{equation} \label{dimple_mm}
\bar{u}=\frac{(2 r_0^2 - R_0^2)}{R_0^2}, \quad
\lambda = \frac{r_+^2  - 2 r_0^2}{r_+^2}, \quad r_0=\Bigl[\frac{1+ \bar{u}}{2}\Bigr]^{1/2} R_0,
\end{equation}
for some $0 < r_0 <r_+$, see (\ref{dimple_r})--(\ref{lambda_d}).
Thus $\lambda=\lambda(r_+, r_0)$ and $r_0=r_0(\bar{u}, R_0).$ We shall see that for $0<\epsilon \ll 1$, nontrivial radial dimple solutions exist for all $-1< \bar{u}=-1 + O(\epsilon)$ and $0<r_0 = O(\epsilon)$ sufficiently small; in particular, as $\epsilon \downarrow 0,$ $\bar{u} \downarrow -1,$ $r_0 \downarrow 0$, and $u(r;\epsilon) \downarrow -1$, $0<r<R_0.$

The parameter $r_+$, reflecting the location of the free boundary, is to be determined. The value $u(0) \in (-1,1]$ is
as an additional free parameter to be determined. Thus, we have three boundary conditions  and the equivalent mean mass constraint, $\bar{u}=(2 r_0^2 - R_0^2)/{R_0^2},$
as well as the range constraint on $u(0)$.
In solving (\ref{dimple_r})--(\ref{lambda_d}), we have four degrees of freedom, counting  the parameters $u(0)$ and $r_+,$ with  $\lambda=\lambda(r_+, r_0)$. Hence, from the ``count'' of the parameters, we would expect
the possible dimple solutions to be uniquely determined by
 $\bar{u}$ (or, equivalently, by $r_0$).

Setting
\begin{equation}\label{rescaling_d}
q=\frac{r}{\e},  \ \quad
     q_0=\frac{r_0}{\e}, \quad
    q_+=\frac{r_+}{\e}, \quad Q_0=\frac{R_0}{\e}, \quad \hbox{and} \quad v(q):=[u(r)+\lambda]|_{r=\epsilon q},
\end{equation}
    we get the following problem for $v(q)$,
\begin{equation}\label{dimple_v}
\begin{split}
& qv_{qq}+v_q+qv=0, \qquad 0<q< q_+,\\
& v(0)=u(0)+\lambda \in (\lambda  -1, \lambda +1], \quad  v(q_+)=-1+\lambda,\\
& v_q(0)=0, \quad v_q(q_+)=0,
\end{split}\end{equation}
where $\lambda= {q_+^2  - 2 q_0^2}/{q_+^2}$.              The equation in (\ref{dimple_v})  is a Bessel's equation of order zero, whose general solution is $v(q)=c_1 J_0(q)+c_2 Y_0(q),$  $c_1, \, c_2 \in R,$
and hence  $v_q(q)=-c_1 J_1(q)-c_2 Y_1(q)$).
Since $J_0(0)=1,$ $J_1(0)=0,$ and $Y_1(0) \ne 0,$
 the boundary conditions at $q= 0$ imply that $c_1=u(0)+\lambda,$
$c_2=0$. Thus
\begin{equation}\label{vq}
v(q)= (u(0)+\lambda) J_0(q),\quad
v_q(q)=-(u(0)+\lambda) J_1(q).
\end{equation}

The values of $J_0$  at its sequential minima are increasing  \cite[10.3,10.21,10.18]{dlmf}, and  $v_q(q_+)=0$ with $v(q_+)=-1 + \lambda,$ which corresponds to $u_r(r_+)=0,$ $u(r_+)=-1$. Hence
the range constraint,  $-1 \le u(r) \le 1,$ implies  that $q_+=\bar{q},$ where $\bar{q}$ corresponds
to the (unique) first positive zero of the function $J_1(q)$ \cite[10.21]{dlmf}. Thus \cite[Tables 9.1, 9.5]{AS},
\begin{equation}\label{q_0}
q_+= \bar{q} \approx 3.8  \quad \hbox{ with }\quad  J_0(q_+)=J_0(\bar{q})\approx -0.4.
\end{equation}

From (\ref{dimple_v})--(\ref{q_0}), we obtain that
\begin{equation}\label{vgg}
v(q_+)=v(\bar{q})=-1+\lambda = (u(0)+\lambda)J_0(\bar{q}),
\end{equation}
and using (\ref{dimple_mm}), (\ref{rescaling_d}), (\ref{vgg}),
\begin{equation} \label{lambda0}
\lambda=\frac{u(0)J_0(\bar{q})+1}{1-J_0(\bar{q})}=\frac{r_+^2 - 2 r_0^2}{r_+^2}=\frac{r_+^2-(1 + \bar{u})R_0^2}{r_+^2}. \end{equation}
From (\ref{q_0}), (\ref{lambda0}),
\begin{equation} \label{r0_d}
r_0=\epsilon \bar{q} \, \Bigl[ \frac{-(1+ u(0))J_0(\bar{q})}{2(1-J_0(\bar{q}))}\Bigr]^{1/2}, \quad \bar{u} = -1 - \epsilon^2 \,\frac{\bar{q}^2 (1 + u(0))J_0(\bar{q})}{R_0^2 (1-J_0(\bar{q}))},
\end{equation}
and recalling (\ref{rescaling_d}), (\ref{vq}), (\ref{q_0}), (\ref{lambda0}), (\ref{r0_d}),
\begin{equation} \label{dimple_s}
u(r) =  \frac{(u(0) +1) J_0({r}/{\epsilon}) -u(0)\,J_0(\bar{q})-1}{1 - J_0(\bar{q})}, \quad 0\le r \le r_+ = \epsilon \bar{q},
\end{equation}
with
\begin{equation} \label{dimple_s1}
 u(0)+1= \frac{(1+ \bar{u})R_0^2 (J_0(\bar{q})-1)}{\epsilon^2 \bar{q}^2 J_0(\bar{q})}.
\end{equation}
From (\ref{lambda0})--(\ref{dimple_s}), we see that there is a one-parameter family of dimple solutions which is uniquely determined by $\bar{u}$, or equivalently by $r_0$ or
by $u(0)$, with $u(0) \in [-1,1]$.

As $u(0) \downarrow -1$,
$$ \lambda \uparrow 1, \quad r_0 \downarrow 0, \quad \bar{u} \downarrow -1, \quad \hbox{and} \quad u(r) \downarrow -1, \quad \forall r \in [0, R_0].$$

As $u(0) \uparrow +1$,
$$\lambda \downarrow \frac{1+ J_0(\bar{q})}{1-J_0(\bar{q})}, \quad  r_0 \uparrow  \epsilon \bar{q} \, \Bigl[ \frac{-2 J_0(\bar{q})}{2(1-J_0(\bar{q}))}\Bigr]^{1/2},
\quad \bar{u} \uparrow -1 - \epsilon^2\frac{ 2 \bar{q}^2  J_0(\bar{q})}{R_0^2 (1-J_0(\bar{q}))},$$
and $u(r) \uparrow \Bigl[\frac{2 J_0({r}/{\epsilon}) -J_0(\bar{q})-1}{1 - J_0(\bar{q})}\Bigr]$  for  $r\in [0, r_0]$.
Note in particular that
$$\lim_{\epsilon \downarrow 0} \lim_{u(0) \downarrow -1} u(r; \epsilon, u(0)) = \left\{ \begin{array}l -1 \quad \forall r \in (0, R_0], \\ 1 \quad r=0, \end{array}
\right. \quad \lim_{\epsilon \downarrow 0} \lim_{u(0) \downarrow -1}r_0(\epsilon, u(0))=0.$$

Summarizing the results above,
\begin{theorem} \label{nrdimples}
Given $0<\epsilon \ll 1$ and $\mathcal{B}_{R_0}(\delta),$ with $R_0=O(1)$ and $\epsilon \ll \delta \ll 1$. There exists a unique radial dimple solution,
for any $\bar{u} \in (-1, -1 - 2 \epsilon \bar{q}/R_0)^{2} J_0(\bar{q})/(1-J_0(\bar{q}))$, where $\bar{q}$ denotes the first positive zero of $J_1(q)$.
\end{theorem}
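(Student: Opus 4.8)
The statement is a synthesis of the explicit analysis of Section~\ref{subsecDR}, and the plan is to verify that the chain of implications there is reversible, so that existence and uniqueness follow simultaneously from a single monotone bijection. First I would record the reduction: via the rescaling (\ref{rescaling_d}) the dimple problem (\ref{dimple_r})--(\ref{lambda_d}) is equivalent to the zero-order Bessel boundary value problem (\ref{dimple_v}) with $\lambda=(q_+^2-2q_0^2)/q_+^2$, so it suffices to exhibit a unique pair $(v,q_+)$. The general solution $v=c_1J_0+c_2Y_0$ has $v,v_q$ finite at $q=0$ only if $c_2=0$, since $Y_0,Y_1$ are unbounded at the origin, and then $v_q(0)=0$ holds automatically because $J_1(0)=0$; thus $v(q)=(u(0)+\lambda)J_0(q)$ as in (\ref{vq}), with the single remaining free constant $u(0)$ alongside the still-undetermined $q_+$.

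Next I would pin down $q_+$, which is the one step requiring a genuine input rather than bookkeeping, and I expect it to be the main point to get right. The Neumann condition $v_q(q_+)=0$ forces $J_1(q_+)=0$, so $q_+$ is a positive zero of $J_1$, i.e.\ an extremum of $J_0$, and $v(q_+)=-1+\lambda$ together with $-1\le u\le 1$ forces this extremal value to be the most negative value available. The relevant classical fact is that $|J_0|$ is strictly decreasing along the successive extrema of $J_0$ — equivalently, the minima of $J_0$ form a strictly increasing sequence, see \cite[10.21, 10.18]{dlmf} — with $|J_0(\bar q)|<1$, so that $J_0(\bar q)\approx-0.40$ is the global minimum of $J_0$ and $J_0$ decreases monotonically on $[0,\bar q]$. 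If $q_+$ were a later zero of $J_1$, then evaluating $v=(u(0)+\lambda)J_0$ at the interior point $\bar q$ would drive $u=v-\lambda$ outside $[-1,1]$: a later minimum of $J_0$ is smaller in magnitude than $J_0(\bar q)$, which forces $u(\bar q)<-1$, while a positive maximum of $J_0$ forces the sign of $u(0)+\lambda$ to push $u(0)$ or $u(\bar q)$ out of range. Hence $q_+=\bar q$, the first positive zero of $J_1$, $r_+=\epsilon\bar q$, and since $\epsilon\ll\delta\ll1$ we have $r_+<R_0-\delta$, so the solution is admissible in $\mathcal{B}_{R_0}(\delta)$; this is (\ref{q_0}).

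Finally, with $q_+=\bar q$ fixed, the remaining boundary condition $v(\bar q)=-1+\lambda=(u(0)+\lambda)J_0(\bar q)$ determines $\lambda$ as the affine function (\ref{lambda0}) of $u(0)$, and gives $u(0)+\lambda=(\lambda-1)/J_0(\bar q)>0$ whenever $\lambda<1$, i.e.\ $\bar u>-1$, so that $v$, hence $u$, is genuinely monotonically decreasing on $[0,r_+]$ as required. Equating (\ref{lambda0}) with the equivalent-mean-mass value $\lambda=(r_+^2-2r_0^2)/r_+^2=1-(1+\bar u)R_0^2/(\epsilon^2\bar q^2)$ yields the relation (\ref{dimple_s1}),
\begin{equation} \nonumber
\bar u = -1 - \epsilon^2\,\frac{\bar q^2(1+u(0))J_0(\bar q)}{R_0^2(1-J_0(\bar q))},
\end{equation}
whose right-hand side, since $J_0(\bar q)<0$, is strictly increasing and affine in $u(0)$ and hence a bijection of $u(0)\in(-1,1]$ onto the interval stated in the theorem, the open left endpoint $\bar u=-1$ corresponding in the limit $u(0)\downarrow-1$ to the trivial solution $u\equiv-1$, which is excluded among dimple solutions. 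Reading this bijection forwards produces, for each admissible $\bar u$, a dimple solution with $u(0)$ its unique preimage, $u$ then completely determined by (\ref{dimple_s})--(\ref{dimple_s1}); reading it backwards gives uniqueness. The monotone degeneration of the family as $u(0)\downarrow-1$ and as $u(0)\uparrow1$ displayed above confirms the endpoints of the $\bar u$-range.
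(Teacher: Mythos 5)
Your proposal is correct and follows essentially the same route as the paper: rescaling to the order-zero Bessel problem, forcing $c_2=0$ by regularity at the origin, identifying $q_+=\bar q$ from the Neumann condition together with the range constraint and the classical fact that the sequential extrema of $J_0$ decrease in magnitude, and then reading the affine relation between $u(0)\in(-1,1]$ and $\bar u$ (via the equivalent mean mass Lagrange multiplier $\lambda$) as a monotone bijection onto the stated interval. The only difference is cosmetic: you spell out in slightly more detail why later zeros of $J_1$ are excluded and why $u(0)>-1$ yields $\lambda<1$ and hence monotone decrease, steps the paper states more tersely.
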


\bigskip
Finally,  using (\ref{dimple_r}) and recalling (\ref{funcDQOP}), it is straightforward to verify that
\begin{equation} \label{E_dimple}
E(t):=  \frac{1}{\epsilon |\Omega|} \int_{\Omega} \{ (1-u^2) + \epsilon^2 |\nabla u|^2 \} \, dx   =\frac{\epsilon \bar{q}^2}{R_0^2} (1 - \lambda^2),
\end{equation}
where $\bar{q}$ and $\lambda$ are prescribed  in (\ref{q_0}) and (\ref{lambda0}), respectively.



\section{Connecting the Dynamic Problems}\label{section3}

At least on a somewhat superficial level, the attractor dynamics for both (SD) and (DQOP) can be seen to be similar in $2D$,
under the assumptions outlined in  Section \ref{Steadystates},  given $\mathcal{B}_{R_0}(\delta)$ with $0<\epsilon \ll \delta \ll  1$ and $R_0=O(1)$.
 The steady states for (SD) are given by
circular curves centered at the origin, as well as the possible translates of these circular curves  that lie within $\mathcal{B}_{R_0}(\delta)$ and the possible nonintersecting union of such curves that lie within
$\mathcal{B}_{R_0}(\delta)$. Similarly,     the steady states for (DQOP) are given radial solutions which are  either annular or dimple solutions, as well as their possible translates within $\mathcal{B}_{R_0}(\delta)$
and solutions obtain via composites of the above given the limitations of the domain $\mathcal{B}_{R_0}(\delta)$.
The equivalent mean mass condition for (DQOP) prescribes the mean mass, $\bar{u}$, in terms of an effective radius, $r_0$, with $0\le r_0 <R_0 -\delta.$
Accordingly, it is possible to identify a $1-1$ correspondence between the set of radial steady states of (SD), namely circular curves with radius $r_0$, with $0< r_0 < R_0 -\delta$,  and  the set of radial steady states
for (DQOP). For simplicity, we may limit our focus to the set of axi-symmetric steady states in both cases, leaving aside for the moment the
technical difficulties entailed in taking into account the somewhat larger class of steady states produced by translation. Also, we are neglecting possible radial ringed solutions, composed of concentric circular
curves for (DQ) and radially symmetric composite multiple transition (multi-annular solutions, possibly with a dimple solution at the origin).

\smallskip
Thus, if we can identify similar stability properties for both evolutions, we are well on our path to connecting the evolutions. The difficulty arises in considering stability for both evolutions
in similar functional analytic settings and in  a manner which permits both evolutions to be simultaneously tracked globally in time. We first outline briefly the perhaps easiest and  most direct approach,
which arises naturally  in view of
 the extant results in the literature for both evolutions, explaining some of the pitfalls in connecting the evolutions. Afterwards,  we outline some of the details pertaining to a more robust approach. The more robust approach is
 based on considering similar minimizing motion evolutionary descriptions for both evolutions, and making a step-by-step connection between the two motions via an appropriate lifting and projecting algorithm.

\subsection{Stability} \label{section3s}


With regard to the stability in the context of (SD) for circular curves, the results of Wheeler \cite{Wheeler2013} are useful. The theory there is based  on the   following  local existence theorem, which is paraphrased below:
\begin{theorem} \label{Wheeler_existence}
Suppose that $\Gamma_0: R  \rightarrow R^2$ is a periodic regular curve parametrised by arc-length of class $\mathcal{C}^2 \cap W^{2,2}$ with $||\kappa||_2< \infty.$ Then there exists a time $T\in (0, \infty]$ and a unique one-parameter family of immersions $\Gamma: R \times [0,T) \rightarrow R^2$ parametrized by arc-length satisfying (SD) such that $(i)$ $\Gamma(0)=\Gamma_0,$ $(ii)$ $\Gamma(\cdot,t)$ is of class $\mathcal{C}^{\infty}$ and periodic of period
$\mathcal{L}(\Gamma(\cdot, t))$ for every $t \in (0,T)$, and $(iii)$ $T$ is maximal.
\end{theorem}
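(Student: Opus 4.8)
This restates Theorem~\ref{SDexistlocal} (both are due to Wheeler \cite{Wheeler2013}); here is how I would argue. Since $(\mathrm{SD})$ is geometric one first fixes a gauge. As $\Gamma_0\in\mathcal C^2\cap W^{2,2}$ is a regular closed curve it has a tubular neighbourhood, so for short time the evolving curve can be written as a normal graph $\Gamma(x,t)=\Gamma_0(x)+\rho(x,t)\,n_0(x)$ over $\Gamma_0$, with $n_0$ the unit normal of $\Gamma_0$, $\rho(\cdot,t)$ a $P$-periodic scalar, and $\rho(\cdot,0)\equiv 0$. Inserting this into $V=-\kappa_{ss}$ and discarding the purely tangential part (which only reparametrizes) yields a scalar equation
\[
\partial_t\rho=-a\bigl(x,\rho,\partial_x\rho\bigr)\,\partial_x^4\rho+b\bigl(x,\rho,\partial_x\rho,\partial_x^2\rho,\partial_x^3\rho\bigr),\qquad x\in S^1,\ t>0,
\]
with $a>0$ wherever the curve is regular (its value at $\rho=0$ coming from the principal part $-\partial_s^4$ of $-\kappa_{ss}$ in the normal direction); thus the equation is uniformly fourth-order parabolic as long as $\rho$ stays small in $W^{3,\infty}$.

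The analytic engine is $L^2$-maximal regularity. Freezing coefficients at $t=0$, the operator $\rho\mapsto a(x,0)\partial_x^4\rho$ on the circle generates an analytic semigroup on $L^2(S^1)$ and has maximal $L^2$-regularity, and the parabolic class $\mathbb E_T:=H^1((0,T);L^2(S^1))\cap L^2((0,T);H^4(S^1))$ has trace (initial-data) space exactly $(H^4,L^2)_{1/2,2}=H^2(S^1)$. This is precisely why the hypothesis $\|\kappa_0\|_{L^2}<\infty$, i.e. $\Gamma_0\in W^{2,2}$, is the sharp one for a \emph{fourth}-order flow: one controls only \emph{two} spatial derivatives of $\Gamma_0$, in $L^2$, and that is exactly the trace space. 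Since $\Gamma_0\in\mathcal C^2$ the frozen coefficients are continuous, so a localization-in-time perturbation argument transfers maximal regularity to the variable-coefficient linearization.

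One then recasts the quasilinear problem as a fixed point $\rho=\mathcal S\bigl(\mathcal N(\rho)\bigr)$, with $\mathcal S$ the maximal-regularity solution operator of the frozen linear problem and $\mathcal N$ collecting the lower-order and coefficient-difference terms, and checks that $\mathcal N$ maps a small ball of $\mathbb E_T$ into itself and is a contraction for $T$ small, using $\mathbb E_T\hookrightarrow C([0,T];H^2)\hookrightarrow C([0,T];\mathcal C^{1,1/2})$ and $\mathbb E_T\hookrightarrow L^2((0,T);\mathcal C^{3,1/2})$, together with the fact that $\mathcal N(\rho)$ and its Lipschitz constant tend to $0$ as $T\downarrow 0$ because $\rho(\cdot,0)=0$. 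This gives a unique $\rho\in\mathbb E_T$, hence a unique solution $\Gamma$ of $(\mathrm{SD})$ in this gauge. Parabolic smoothing then upgrades regularity: for $t>0$ the coefficients are more regular, and differentiating the equation — equivalently, running Gronwall estimates on $\int(\partial_s^m\kappa)^2\,ds$ for $\partial_t\kappa=-\kappa_{ssss}+(\text{lower order in }\kappa,\kappa_s,\kappa_{ss},\kappa_{sss})$ and its $s$-derivatives — yields uniform bounds on all spatial derivatives on $[\delta,T)$ for every $\delta>0$, so $\Gamma(\cdot,t)\in\mathcal C^\infty$ for $t\in(0,T)$. Re-parametrizing $\Gamma(\cdot,t)$ by arc-length produces the stated family, of period $\mathcal L(\Gamma(\cdot,t))$; uniqueness holds modulo the flow's invariance group, since two $(\mathrm{SD})$-solutions with the same initial image differ only by a time-dependent reparametrization. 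Letting $T$ be the supremum of such existence intervals and gluing gives maximality, which is $(iii)$.

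The main obstacle is reconciling the fourth order of the equation with the low regularity of the data; the step needing the most care is that $H^2(S^1)$ is precisely the trace space of the $L^2$-maximal-regularity class of the fourth-order operator, so one must (a) verify maximal $L^2$-regularity for the frozen operator on $S^1$ (classical, e.g. via Fourier series on the circle or $\mathcal R$-boundedness of the resolvent), (b) control the coefficient differences $a(\cdot,\rho,\partial_x\rho)-a(\cdot,0,0)$ in $C^0$ uniformly for small $T$, and (c) estimate the nonlinear remainder in $L^2((0,T);L^2(S^1))$ using only the $\mathbb E_T$-norm, which forces the critical Sobolev products (e.g. $H^3\cdot H^2\hookrightarrow L^2$ on $S^1$) to be used with exponents exactly matched. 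Everything closes because these exponents do match.
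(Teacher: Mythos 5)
First, note that the paper itself offers no proof of this statement: Theorem~\ref{Wheeler_existence} (like Theorem~\ref{SDexistlocal}) is a paraphrase of a result quoted from Wheeler \cite{Wheeler2013}, so your sketch has to be judged against the standard PDE argument rather than against an in-paper argument. Your overall strategy (write the flow as a normal graph, get a fourth-order quasilinear parabolic scalar equation, solve by maximal regularity plus a fixed point, then use parabolic smoothing and reparametrize by arc-length) is indeed the standard route, but as written it has a genuine gap at exactly the point you flag as the crux: the choice of gauge. You graph over $\Gamma_0$ itself, with $\rho(\cdot,0)\equiv 0$. For $\Gamma_0$ only of class $\mathcal{C}^2\cap W^{2,2}$ this fails twice. (a) The coefficients of the scalar equation are not just $a(x,\rho,\partial_x\rho)$ and a remainder $b$ in $\rho$-derivatives: converting $\partial_s$ into $\partial_x$ and expanding $\kappa_{ss}$ for the graph curve injects the reference geometry, i.e.\ $\kappa_{\Gamma_0}$ and its first and second $x$-derivatives (equivalently third and fourth derivatives of $\Gamma_0$), into $b$; these do not exist for $\Gamma_0\in \mathcal{C}^2\cap W^{2,2}$, so the equation you propose to run the maximal-regularity fixed point on is not even well defined in $L^2((0,T);L^2)$. (b) The smoothing conclusion $(ii)$ cannot hold in this gauge: $x\mapsto \Gamma_0(x)+\rho(x,t)n_0(x)$ is never better than $\mathcal{C}^2$ in $x$ no matter how smooth $\rho(\cdot,t)$ is, so you cannot produce $\mathcal{C}^\infty$ curves for $t>0$ this way. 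Moreover, with $\rho_0\equiv 0$ your ``sharp trace space $H^2=(L^2,H^4)_{1/2,2}$'' rationale is vacuous — the initial datum of your unknown is identically zero, so the $W^{2,2}$ hypothesis never enters through the trace space in your setup; it would have to enter through the (nonexistent) coefficient regularity.

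The standard repair, and essentially what underlies the rough-data statement in \cite{Wheeler2013}, is to graph over a \emph{smooth} reference curve close to $\Gamma_0$ in $\mathcal{C}^1\cap W^{2,2}$ (e.g.\ a mollification of $\Gamma_0$), so that the coefficients are smooth and the unknown has small but nonzero initial datum $\rho_0\in H^2$; only then does your trace-space/maximal-regularity argument genuinely carry the weight of the hypothesis $\|\kappa\|_{L^2}<\infty$, and only then can interior-in-time smoothing give $\mathcal{C}^\infty$ curves. (An alternative is to smooth the initial curve, run the flow from smooth data, and pass to the limit using a priori estimates on $\int \kappa_s^2\,ds$ etc.\ that depend only on $\|\kappa\|_{L^2}$.) Two further points need more than a gloss: uniqueness at this regularity is \emph{geometric} uniqueness — you must show any arc-length-parametrized solution can be written, for a short time, as a graph over the chosen reference curve and then invoke uniqueness of the scalar problem, and (as the paper notes after Theorem~\ref{SDexistlocal}) it is uniqueness modulo the invariance group; and maximality of $T$ requires the usual continuation argument (a uniform bound precluding blow-up lets you restart the flow), not just taking a supremum and ``gluing''. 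With the reference-curve fix and these two points made precise, your outline matches the accepted proof.
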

 The regularity requirements on the initial data $\Gamma_0$ can be somewhat weakened,  \cite{Wheeler2013}.

The stability results \cite[Theorem 1.1]{Wheeler2013}, paraphrased below,   are formulated in terms of the normalized oscillation of curvature,
\begin{equation} \nonumber
\kappa_{osc}(\Gamma(\cdot,t)):=\mathcal{L}(\Gamma(\cdot,t))\int_{\Gamma} (\kappa - \bar{\kappa})^2 \, ds \quad\hbox{with}\quad \bar{\kappa}(\Gamma(\cdot,t))={\mathcal{L}}^{-1}(\Gamma(\cdot, t)) \int_{\Gamma} \kappa \, ds,
\end{equation}
 and the isoperimetric ratio, $\mathcal{I}(\Gamma(\cdot,t)):={\mathcal{L}}^2(\Gamma(\cdot, t))\,[4 \pi \mathcal{A}(\Gamma(\cdot,t))]^{-1}$, where  $\mathcal{A}(\Gamma(\cdot,t))$ denotes the area  enclosed by $\Gamma(\cdot,t)$.
 \smallskip
 \begin{theorem}  \label{Wheeler_stability}
 Suppose that $\Gamma_0: S^1 \rightarrow R^2$ is a regular smooth immersed closed curve with $\mathcal{A}(\Gamma_0)>0$ and $\int_{\Gamma_0} \kappa \, ds=2 \pi.$ There exists a constant $\kappa^\ast>0$ such that if
 \begin{equation} \nonumber
 \kappa_{osc}(\Gamma_0)< \kappa^\ast \quad \hbox{\, and \,} \quad \mathcal{I}(\Gamma_0) < exp\left(\frac{\kappa^\ast}{8 \pi^2}\right),
 \end{equation}
 then under (SD) evolution, $\Gamma: S^1 \times [0,T) \rightarrow R^2$ with $\Gamma_0$ as initial data exists for all time and converges exponentially fast to a round circle with radius $\sqrt{\mathcal{A}(\Gamma_0)/\pi}$.
 \end{theorem}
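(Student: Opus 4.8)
\medskip\noindent\textbf{Proof plan (for Theorem~\ref{Wheeler_stability}).} The plan is to follow Wheeler \cite{Wheeler2013}. By the local existence result (Theorem \ref{Wheeler_existence}) there is a unique smooth solution on a maximal interval $[0,T)$, parametrised by arc-length, and I would first record the evolution of the elementary geometric quantities: under $V=-\kappa_{ss}$ one has, while the flow stays smooth, $\partial_t\kappa=-\kappa_{ssss}-\kappa^2\kappa_{ss}$ and $\partial_t\,ds=\kappa\kappa_{ss}\,ds$, hence
\begin{equation} \nonumber
\frac{d}{dt}\mathcal{L}(t)=-\int_\Gamma\kappa_s^2\,ds\le 0,\qquad \frac{d}{dt}\mathcal{A}(t)=0,\qquad \frac{d}{dt}\int_\Gamma\kappa\,ds=0 .
\end{equation}
Since $\Gamma_0$ is closed with $\int_{\Gamma_0}\kappa\,ds=2\pi$ (turning number one), $\int_\Gamma\kappa\,ds\equiv2\pi$ and $\bar\kappa(t)=2\pi/\mathcal{L}(t)$; together with the isoperimetric inequality $\mathcal{L}^2\ge4\pi\mathcal{A}(\Gamma_0)$ this pins $\mathcal{L}(t)$ into the fixed window $[\,2\sqrt{\pi\mathcal{A}(\Gamma_0)}\,,\,\mathcal{L}(\Gamma_0)\,]$ and shows $\mathcal{I}(t)$ is non-increasing --- the two-sided length bound is what will make all the interpolation constants below uniform in $t$. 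The remaining three steps are: (i) an a priori estimate trapping $\kappa_{osc}(\Gamma(\cdot,t))$ below $\kappa^\ast$ for all $t\in[0,T)$; (ii) a bootstrap to uniform-in-time bounds on all curvature derivatives, forcing $T=\infty$; (iii) an upgrade of the decay of $\kappa_{osc}$ to exponential convergence to a circle.

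The heart of the argument is step (i). Using the identities above, a direct computation gives, with $\kappa_{osc}=\mathcal{L}\int_\Gamma(\kappa-\bar\kappa)^2\,ds$,
\begin{equation} \nonumber
\frac{d}{dt}\kappa_{osc}= -2\,\mathcal{L}\int_\Gamma\kappa_{ss}^2\,ds \;+\; 2\bar\kappa^2\,\mathcal{L}\int_\Gamma\kappa_s^2\,ds \;+\; \mathcal{L}\,R ,
\end{equation}
where $R$ collects cubic and quartic terms in $\kappa-\bar\kappa$ and its $s$-derivatives (together with a further manifestly negative contribution coming from $\mathcal{L}'$). The subtlety is that the naive Wirtinger inequality $\int_\Gamma\kappa_s^2\,ds\le(\mathcal{L}/2\pi)^2\int_\Gamma\kappa_{ss}^2\,ds$ only makes the second term \emph{equal} to the dissipation, so there is no margin at leading order. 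The gain must come from the closing-up constraint: for a closed curve of turning number one the first curvature harmonic --- the mode that saturates Wirtinger --- is excluded to leading order, which yields an improved inequality with constant $\mathcal{L}^2/(16\pi^2)$ in place of $\mathcal{L}^2/(4\pi^2)$, valid modulo an error controlled by $\kappa_{osc}$. Absorbing this error together with $R$ --- estimated by Gagliardo--Nirenberg--Sobolev interpolation on $\Gamma$ using the uniform length bounds --- into the dissipation then gives, whenever $\kappa_{osc}\le\kappa^\ast$,
\begin{equation} \nonumber
\frac{d}{dt}\kappa_{osc}\le -c\,\mathcal{L}\int_\Gamma\kappa_{ss}^2\,ds\le -c'\,\kappa_{osc}\qquad (c,c'>0),
\end{equation}
the last step again by Wirtinger. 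Hence $\kappa_{osc}$ is strictly decreasing as long as it lies below $\kappa^\ast$, so a continuation argument yields $\kappa_{osc}(\Gamma(\cdot,t))\le\kappa_{osc}(\Gamma_0)<\kappa^\ast$ on all of $[0,T)$, with exponential decay. The second hypothesis $\mathcal{I}(\Gamma_0)<\exp(\kappa^\ast/8\pi^2)$ is consumed precisely here: it bounds the total multiplicative drop of $\mathcal{L}$ (equivalently $\log(\mathcal{I}(\Gamma_0)/\mathcal{I}(\infty))$) along the flow in terms of $\kappa^\ast$, which keeps the curve in the regime where the improved Poincar\'e inequality and the error bookkeeping remain valid; the factor $8\pi^2=2(2\pi)^2$ reflects $\log\mathcal{I}$ being weighed against $2\,\mathcal{L}\int_\Gamma\bar\kappa^2\,ds$.

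For step (ii), once $\int_\Gamma\kappa^2\,ds$ is uniformly bounded, the evolution equations for $\int_\Gamma(\partial_s^k\kappa)^2\,ds$, $k\ge1$, are --- term by term --- dominated by their own fourth-order dissipative term after interpolation, so all of these quantities stay bounded uniformly on $[0,T)$; by the blow-up characterisation of the maximal time in Theorem \ref{Wheeler_existence} this rules out $T<\infty$, so $T=\infty$. For step (iii), the exponential decay of $\kappa_{osc}$ together with the uniform higher-order bounds forces $\int_\Gamma\kappa_s^2\,ds$ --- hence $|\mathcal{L}'|$ and the normal speed $|V|=|\kappa_{ss}|$ --- to decay exponentially; integrating $\partial_t X$ in $t$ shows $\Gamma(\cdot,t)$ is Cauchy and converges in $C^\infty$, modulo the geometric invariances, to a stationary curve, which can only be a round circle, whose radius is fixed by $\mathcal{A}(\Gamma(\cdot,t))\equiv\mathcal{A}(\Gamma_0)$ and turning number one to be $\sqrt{\mathcal{A}(\Gamma_0)/\pi}$.

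The main obstacle is the balance in step (i): because the dissipation and the $2\bar\kappa^2\mathcal{L}\int_\Gamma\kappa_s^2\,ds$ term coincide at leading order, one cannot tolerate a lossy Poincar\'e inequality, and the strict gain has to be extracted from the closed-curve constraint while keeping quantitative control of the error as $\mathcal{L}$ shrinks. It is exactly this coupling between the smallness of $\kappa_{osc}$ and the smallness of the isoperimetric ratio that fixes the admissible value of $\kappa^\ast$ and makes the a priori estimate close.
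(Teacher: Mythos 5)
First, a point of reference: the paper does not prove this statement at all --- Theorem~\ref{Wheeler_stability} (like its twin, Theorem~\ref{SDglobal}) is a direct paraphrase of \cite[Theorem 1.1]{Wheeler2013}, so your sketch has to be measured against Wheeler's argument. Your outline agrees with Wheeler's in its skeleton (evolution identities, conservation of $\mathcal{A}$ and of $\int\kappa\,ds$, two-sided length bound, trapping of $\kappa_{osc}$, bootstrap of curvature derivatives to force $T=\infty$, then exponential convergence), and your identification of the net borderline term $2\bar\kappa^2\mathcal{L}\int_\Gamma\kappa_s^2\,ds=\frac{8\pi^2}{\mathcal{L}}\int_\Gamma\kappa_s^2\,ds$ is exactly right. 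But at the crucial step you and Wheeler part ways. Wheeler does \emph{not} invoke an improved Wirtinger/Poincar\'e inequality to beat this term pointwise in time; he accepts that it is not absorbed by the dissipation and instead integrates it in time, using $\int_0^t\mathcal{L}^{-1}\int_\Gamma\kappa_s^2\,ds\,dt'=\log\bigl(\mathcal{L}(0)/\mathcal{L}(t)\bigr)\le\tfrac12\log\mathcal{I}(\Gamma_0)$ (by $\mathcal{A}(t)\equiv\mathcal{A}(\Gamma_0)$ and the isoperimetric inequality), so that its total contribution to $\kappa_{osc}$ is at most $4\pi^2\log\mathcal{I}(\Gamma_0)<\kappa^\ast/2$. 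That is precisely where the hypothesis $\mathcal{I}(\Gamma_0)<\exp(\kappa^\ast/8\pi^2)$ and the specific constant $8\pi^2$ come from: it bounds the accumulated growth of $\kappa_{osc}$, keeping it below a fixed multiple of $\kappa^\ast$ on the whole maximal interval, after which the cubic/quartic remainders are absorbed by interpolation.

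The genuine gap in your route is that its central ingredient --- the ``improved'' inequality $\int_\Gamma\kappa_s^2\,ds\le\frac{\mathcal{L}^2}{16\pi^2}\int_\Gamma\kappa_{ss}^2\,ds$ ``modulo an error controlled by $\kappa_{osc}$'' --- is asserted, not proved. The exclusion of the first curvature harmonic comes from the nonlinear closure condition $\int_0^{\mathcal{L}}e^{i\theta(s)}\,ds=0$, and turning that into a quantitative spectral-gap statement with an error you can absorb (uniformly as $\mathcal{L}$ varies, and compatibly with the trapping argument that itself requires $\kappa_{osc}$ small) is a nontrivial lemma that you would have to supply; it is not in \cite{Wheeler2013}. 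Moreover, your accounting of the second hypothesis does not cohere within your own scheme: the two-sided bound $2\sqrt{\pi\mathcal{A}(\Gamma_0)}\le\mathcal{L}(t)\le\mathcal{L}(\Gamma_0)$ already follows from area conservation plus the isoperimetric inequality, so ``bounding the multiplicative drop of $\mathcal{L}$'' does not genuinely consume $\mathcal{I}(\Gamma_0)<\exp(\kappa^\ast/8\pi^2)$ --- if your improved-Poincar\'e argument closed as described, that hypothesis would be essentially superfluous, which is a strong sign the bookkeeping at the borderline term has not actually been closed. Either prove the quantitative spectral-gap lemma (in which case you would have an interesting variant, possibly with weaker hypotheses), or switch to Wheeler's time-integrated mechanism, where the stated hypotheses are used exactly as written. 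Your steps (ii) and (iii) are fine in outline, though note that the blow-up criterion you invoke for the maximal time is part of Wheeler's local theory, not of Theorem~\ref{Wheeler_existence} as stated here.
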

 \smallskip
 The results in \cite{Wheeler2013} are quite pleasing. However, one cannot conclude directly from either of these theorems
that if $\Gamma_0 \subset \mathcal{B}_{R_0}(\delta),$ then
$\Gamma(\cdot,t) \subset \mathcal{B}_{R_0}(\delta)$ for $t \in (0,T)$.

\smallskip
With regard to stability in the (DQOP) context, suppose we wish to demonstrate stability for an annular solution located far from the $\mathcal{B}_{R_0}(\delta)$ boundary.
For simplicity, let us consider the stability of an annular solution centered at the origin with $r_0=O(1).$ We know for (SD) that the encompassed area is maintained, but we
do not know off hand that the center of mass does not move. For (DQOP) we similarly know that mass is conserved, but we do not know that the center of mass is time invariant.
So, if we are not overly concerned with maintaining the $\mathcal{B}_{R_0}(\delta)$ structure, a reasonable approach is to to consider zero mass perturbations, making use of the $H^{-1}$ gradient structure.
  Within this context establishing a spectral gap should be straightforward, for zero mass perturbations modulo translations of the center of mass. This would enable us to prove stability of the annular solutions, modulo
  translation, in analogy with the (SD) results above.

\smallskip
In Section \ref{subsecDR}, there exist
 $O(\epsilon)$ energy dimple solutions which equal $-1$ except on a circular region with $O(\epsilon^2)$ area. Clearly a ``stray'' translate of a dimple
(or rather a translate of that part of the dimple solution which differs from $-1$) could be incorporated into the $O(1)$ region where a generic annular solutions equal
$-1$, with a small alteration in the radii of the annular solution to accommodate the additional mass. Such a small ``droplet-like'' perturbation would constitute a small energy perturbation, though not covered by the discussion above, as they are not zero mass perturbations. As such perturbations are natural to consider, we remark here that it is possible to construct a sequence of energy lowering and mass preserving perturbations, which
allow the dimple to lose height and to transfer away volume (mass).  Details to be published elsewhere, together with the (DQOP) stability results described above.

\subsection{Minimizing motions} \label{section3m}
\bigskip
\textbf{Minimizing motion evolution formulation for (SD).}
In Fonseca et.al. \cite{FonsecaFLM},  De Giorgi's minimizing movement approach is implemented within a framework  with  $H^{-1}$ gradient flow structure, to prove
short time existence, uniqueness, and regularity  for the motion of an elastic thin film which evolves by anisotropic surface diffusion.
Their approach yields, as a subcase, a proof of short time existence, uniqueness, and regularity for a spatially period $1D$ curve prescribable
by the graph of a function, $\Gamma=\{(x, h(x): 0<x<b \}$, with $b \in (0, \infty).$

More specifically, starting with    $b$ periodic data initial, $h_0 \in H^2_{loc}(R),$
a sequence of approximants $h_{i,N}$ are defined inductively, for $T>0$, $N \in \mathcal{N}$, $i=1, \ldots, N,$ as a minimizer of $E(h) + \frac{1}{2\tau} d^2(h, h_{i-1, N}),$
where $E(h)$ denotes the system energy, $\tau=T/N$, and $d$ measures the $H^{-1}$ distance between $h$ and $h_{i-1, N}$.
Their choice for $d$ is based on  the following $H^{-1}$ norm for curves $\Gamma$,
$$ || f||_{H^{-1}(\Gamma)} := \sup_{|| \psi||_{H^1(\Gamma)=1}} \int_{\Gamma} f \psi d\mathcal{H}^1(z),$$
which  can be expressed as
\begin{equation} \nonumber
 || f||^2_{H^{-1}(\Gamma)} = \int_{\Gamma} \left( \Phi(z) - \frac{1}{|\Gamma|}\int_{\Gamma} \Phi d\mathcal{H}^1\right)^2 \, d\mathcal{H}^1(z) + \left( \int_{\Gamma} f \, d\mathcal{H}^1 \right)^2,
\end{equation}
where $\Phi(z):=\int_{\Gamma(z_0, z)} f(w) \, d\mathcal{H}^1(w)$,  $z_0=(0, h_{i-1, N}(0))$ and $\Gamma(z_0, z)$ denotes the arc of $\Gamma$ connecting $z_0$ with $z$. Accordingly
they base their scheme  on the $(H^{-1})^2$ penalization
\begin{equation} \label{penalization}
\int_{\Gamma} \left( \int_{\Gamma(z_0, z)} f(w) \, d\mathcal{H}^1(w)\right)^2 \, d\mathcal{H}^1(z),\end{equation}
for $f=h_i - h_{i-1},$ with two constraints reflecting zero mean and periodicity:
\begin{equation} \label{constraints}
\int_{\Gamma} f\, d\mathcal{H}^1 =0, \quad \int_{\Gamma} \int_{\Gamma(z_0, z)} f(w) \, d\mathcal{H}^1(w) \, d\mathcal{H}^1(z)=0.
\end{equation}

To implement their approach in our context, an extension of their approach is needed for closed imbedded planar curves, $\Gamma$.
The resultant penalization with constraints is similar to (\ref{penalization}),(\ref{constraints}), but incorporates an orientational weight.

\bigskip\par\noindent
\textbf{Minimizing motion evolution formulation for (DQOP).}
In \cite{LisiniMS} a minimizing movement scheme  is developed for proving the existence of weak  solutions for a class of degenerate
parabolic equations of fourth order, which includes (DQOP) given the assumptions outlined in Section \ref{sec1}.  This class of evolution equations are shown to
correspond to a gradient flow with respect to a Wasserstein-like transport metric, $W_m$, and the weak solutions are obtained via a scheme based on curves of maximal slope.

The  metric, $W_m$,
which was shown in \cite{Dolbeault} to constitute a genuine metric,
is neither the $L^2$-Wasserstein distance not a flat Hilbertian metric; rather it corresponds to a metric tensor based on the following construction:
given a tangential vector ${v}:=\partial_s \rho(0)$ to a smooth curve $\rho: (-\e, \e) \rightarrow L^1(\Omega)$ of strictly positive densities $\rho(s)$
at $\rho_0=\rho(0)$, it assigns the length
\begin{equation} \label{Lisini8}
||v||^2 = \int_\Omega |D\psi(x)|^2 \, m(\rho_0(x))\, dx, \hbox{\, with \,} -\nabla \cdot (m(\rho_0(x))\nabla \psi(x))=v, \quad x \in \Omega,
\end{equation}
with variational boundary conditions on $\partial \Omega$. Within the context of our framework, $\Omega=\mathcal{B}_{R_0}(\delta)$, as discussed
in Section \ref{sec1}, and
the ambient space for the scheme is the metric space $(X(\Omega), W_m)$ where
\begin{equation}\nonumber
X(\Omega):=\Bigl\{ u \in L^1(\Omega)\, | \, -1 \le u \le 1 \quad a.e.\quad x \in \Omega, \quad \int_\Omega u \, dx =\bar{u}\Bigr\}.
\end{equation}

The minimizing movement scheme starts with initial conditions $u_0 \in X(\Omega)$ such that $E[u_0] < \infty$, and approximants are defined by setting
\begin{equation} \nonumber
u^0_{\tau}:=u_0, \quad u_{\tau}^{n+1}:-\arg\min \Phi_{\tau}^n \in X(\Omega), \quad \Phi_{\tau}^n(\nu):=\frac{1}{2 \tau} W_m(u_{\tau}^n,\nu)^2 + E[\nu],
\end{equation}
with $E[u]:= \infty$ for $u \not\in H^1(\Omega)$. Piecewise constant interpolation is then used to define an approximation $\tilde{u}_{\tau}(t):[0, \infty) \rightarrow X(\Omega).$

\bigskip\par\noindent
\textbf{Lifting and projecting: a Hilbert expansion approach.} As we have seen, there exist minimizing motion schemes
for both (SD) and (DQOP). To establish a rigorous connection between the two evolutions, (SD) and (DQOP), it makes sense to
consider similar time steps, $\tau=T/N$, for both minimizing motions. It is also reasonable to consider similar initial conditions, perhaps similarly
perturbed steady state solutions to (SD) and (DQOP); for simplicity, we might consider  a circle and an annular solution, respectively, with  equivalent mean mass and
both centered at the origin.
 We want to compare the results of the minimizing motion schemes and to demonstrate that
they yield  the same motion in the limit as $\epsilon \rightarrow 0$ and $\tau \rightarrow 0$. Here  the difficulty is that at each time step
the motion
for (SD) is describable in terms of curves $\Gamma_t$ belonging to  $\mathcal{M}$,  the set of smooth simple closed curves, and the motion for (DQOP) is described in terms of functions $u(\cdot, t)$ defined for all $x \in \mathcal{B}_{R_0}(\delta).$  Modulo issues of regularity and structure,    the zero  level set of  $u(\cdot, t)$ yields a ``projection'' of $u(\cdot, t)$ onto the set of curves $\mathcal{M}$.
The matter of ``lifting'' $\Gamma_t$ to obtain   $u(\cdot, t)$ is more delicate. In the context of connecting the Cahn-Hilliard equation with the Mullins-Sekerka problem, Carlen, Carvalho \& Orlandi \cite{CCO}
 used a Hilbert expansion approach to  construct a globally defined function $u(\cdot, t)$ from a curve $\Gamma_t \in \mathcal{M}$.  They constructed an approximation for
  $u(\cdot, t)$ based on  three types of terms, $(i)$ terms depending on $\Gamma_t$, $(ii)$ terms reflecting  local corrections near $\Gamma_t$, and
$(iii)$ terms reflecting long range corrections.

%
%



%


%
\bigskip\par\noindent
\textbf{Acknowledgments}
The authors would like to acknowledge support from the Israel Science Foundation (Grant No. 1200/16).

\end{document}